\documentclass[aps,pra,10pt,superscriptaddress,twocolumn]{revtex4-2}
\usepackage{graphicx}
\usepackage{amsmath,amsthm,amssymb,dsfont,soul,enumitem}
\usepackage{mdframed}
\usepackage{orcidlink}
\usepackage{tikz}
\usetikzlibrary{arrows.meta,positioning,calc,decorations.pathreplacing}
\newcommand{\ket}[1]{\left| #1 \right\rangle}
\newcommand{\bra}[1]{\left\langle #1 \right|}
\newcommand{\proj}[1]{\ket{#1}\hskip-1mm\bra{#1}}
\newcommand{\av}[1]{\langle #1\rangle}

\usepackage{hyperref}
\usepackage[capitalise]{cleveref}
\crefname{equation}{Eq.}{Eqs.}
\crefname{figure}{Fig.}{Figs.}

\newtheorem{theorem}{Theorem}
\newtheorem{lemma}{Lemma}
\newtheorem{proposition}{Proposition}
\newtheorem{corollary}{Corollary}
\theoremstyle{definition}

\crefname{theorem}{Theorem}{Theorems}
\crefname{lemma}{Lemma}{Lemmas}
\crefname{proposition}{Proposition}{Propositions}
\crefname{corollary}{Corollary}{Corollaries}
\crefname{definition}{Definition}{Definitions}
\crefname{table}{Table}{Tables}
\crefname{appendix}{Appendix}{Appendices}
\newcommand{\RoM}{\mathcal{R}}
\newcommand{\stab}{\mathrm{STAB}}
\newcommand{\EC}{\mathcal{E}_{C}}
\newcommand{\Sg}{\mathcal{S}}
\newcommand{\Nor}{N(\mathcal{S})}
\newcommand{\Pa}{\mathcal{P}}
\newcommand{\renc}{\rho_{\mathrm{enc}}}
\newcommand{\Tr}{\operatorname{Tr}}
\newcommand{\Fq}{\mathbb{F}_2}
\newcommand{\capC}{\mathcal{C}}

\begin{document}

\title{Coherent errors, syndrome extraction and the Robustness of Magic in stabiliser codes}

\author{Enrico Bozzetto}
\affiliation{DET, Politecnico di Torino, Corso Duca degli Abruzzi, 24, 10129 Torino, Italy}
\affiliation{Quantum Group, School of Computing, Newcastle University, 1 Science Square, Newcastle upon Tyne, NE4 5TG, UK}
\author{Jonte R. Hance\,\orcidlink{0000-0001-8587-7618}}
\email{jonte.hance@newcastle.ac.uk}
\affiliation{Quantum Group, School of Computing, Newcastle University, 1 Science Square, Newcastle upon Tyne, NE4 5TG, UK}

\begin{abstract}
Coherent errors leave stabiliser-code-encoded magic states in superpositions of syndrome sectors. We ask how much magic these coherent errors add, and whether syndrome extraction depletes logical magic. Using Robustness of Magic, we show that, for a unitary error $U$ on a correctable region which supports no stabiliser, the encoded state has the same robustness as the logical state plus one copy of the Choi state of $U$, in every stabiliser code encoding. Syndrome extraction removes this copy, returning the logical robustness. The removed magic is positive for every non-Clifford error: adding a non-stabiliser state to any state multiplies one plus the robustness by a (tightly-bounded) factor above one. The magic capacity of two-qubit gates Clifford-equivalent to diagonal or single-qubit gates is equal to their Choi-state robustness. We give a linear-algebraic test for when a Clifford error makes syndrome extraction measure a logical operator, and give magic lifetimes of stored single-qubit magic states under Pauli noise. Choosing the plane of the stored state extends these lifetimes about 1.5x on rotated surface codes under low code-capacity depolarising noise. We identify states and channels in odd-prime dimension for which Wigner negativity and robustness show equal lifetimes.
\end{abstract}

\maketitle

\section{Introduction}
\label{sec:intro}

Fault-tolerant quantum computers built on stabiliser codes perform Clifford operations transversally or by lattice surgery~\cite{Horsman2012,Litinski2019}, and obtain universality from magic states consumed by gate teleportation~\cite{GottesmanChuang1999,BravyiKitaev2005,Gidney2024}. Between preparation and use, a magic state sits in a code and is exposed to noise. Under Pauli noise, the encoded state stays a probabilistic mixture of syndrome sectors (the orthogonal subspaces labelled by the syndrome), each holding a mixture of Pauli-rotated copies of the logical state. Coherent errors (e.g., over-rotations and crosstalk~\cite{Kueng2016,Greenbaum2018}) leave the encoded state in a superposition across sectors, meaning syndrome extraction projects the state onto one sector, or dephases the state if information on the corrected error is discarded~\cite{Greenbaum2018,Beale2018,Bravyi2018,Huang2019,Iverson2020,Venn2023}.

For magic states, we ask how much magic the encoded state holds after a coherent error, how much syndrome extraction removes, and whether any of the removed magic is magic of the logical state. Magic in random codes under coherent errors has previously been studied for random Clifford or Haar encoders with a $Z$ rotation on every physical qubit, starting from a logical $\ket{0}$. The stabiliser R\'enyi entropy of the logical state after syndrome measurement then transitions as the error rate or code rate varies~\cite{Niroula2024,SierantTurkeshi2026}; for Haar encoders, the transition is sharp only when we post-select on the trivial syndrome~\cite{SierantTurkeshi2026}. We instead consider a fixed stabiliser code holding a magic logical state, with an error on a correctable region that supports no stabiliser, and compare the magic of the encoded state before and after syndrome extraction. We use Howard and Campbell's Robustness of Magic $\RoM$~\cite{HowardCampbell2017} (\cref{eq:rom}), the smallest $\ell_1$ norm of the coefficients of an expansion of the state into pure stabiliser states, which is 1 for stabiliser mixtures and larger for every other state. It sets the sampling overhead of classical simulation by these quasiprobability decompositions~\cite{HowardCampbell2017,Seddon2021}, and does not increase under the stabiliser operations a decoder uses. We previously showed that, in odd prime dimension, the negative volume of a code-adapted Wigner representation is equal to the syndrome-weighted negativity of the logical states~\cite{BozzettoHance2026}. For qubits, a code-adapted representation still exists, but there is no discrete Hudson theorem (in odd dimension, the only pure states with non-negative Wigner function are the stabiliser states~\cite{Gross2006}), meaning we sadly cannot identify the negativity of the code-adapted representation with logical magic~\cite{BozzettoHance2026}.

Syndrome extraction followed by decoding is a stabiliser operation, and the robustness of the sector-dephased state equals the syndrome-weighted robustness of the decoded logical states (as we show in \cref{lem:sector}, extending the repetition-code result of de Oliveira Junior \emph{et al}~\mbox{\cite{deOliveiraJunior2026}} to every stabiliser code). Our main result concerns a unitary error $U$ on a region $A$ that supports no logical operator and no stabiliser. Such regions are usually called correctable or erasure regions. The encoded state then has the same robustness as $J_U\otimes\rho_L$, the logical state with one copy of the Choi state of $U$ appended, in every such code, while syndrome extraction returns $\RoM(\rho_L)$ (\cref{thm:coherent}). The first statement follows from a code-space version of the canonical form of Fattal \emph{et al}~\cite{Fattal2004} (\cref{lem:bell}), while the second statement follows from the correctability of $A$, combined with \cref{lem:sector}. Both statements hold in every such code, so the robustness that syndrome extraction removes does not depend on the code. (For the stabiliser R\'enyi entropy and pure stabiliser logical states, the analogue of the first statement follows from a result of Hou \emph{et al}~\cite{Hou2026}.) The number of samples needed to simulate a stabiliser circuit by sampling from an optimal decomposition of its input into stabiliser states grows as the square of the robustness of that input~\cite{HowardCampbell2017}, so the first statement also sets this number of samples needed for circuits acting between the error and syndrome extraction. The magic removed, $\RoM(J_U\otimes\rho_L)-\RoM(\rho_L)$, is not magic of the logical state, and we prove that it is positive for every non-Clifford $U$ and every $\rho_L$. This follows because appending any non-stabiliser state $\sigma$ to any state multiplies $\RoM+1$ by at least $\mu(\sigma)>1$, the value of a linear program (\cref{thm:append}). Strict increase already follows where one plus a robustness is multiplicative, as for the generalised robustness of coherence~\cite{Zhu2017} and for some classes of entangled~\cite{VidalTarrach1999,Datta2009,RubboliTomamichel2024} and magic~\cite{Seddon2021,Rubboli2024b,Bravyi2019} states, but this multiplicativity fails for magic in general~\cite{Heimendahl2021,Seddon2021}. The nearest general result is Piani's bound on the relative entropy of entanglement, $E(\rho_{XY})\ge E^{\mathcal M}(\rho_X)+E(\rho_Y)$, where $E^{\mathcal M}$ is the relative entropy measured by a restricted class $\mathcal M$ of measurements on $X$~\cite{Piani2009}. Piani gives this bound for any compact convex set of free states that is closed under partial trace and under conditioning on outcomes of measurements in $\mathcal M$, and $E^{\mathcal M}$ is positive on every non-free state when $\mathcal M$ contains an informationally complete measurement. The stabiliser polytope is such a set when $\mathcal M$ is the set of measurements whose elements are non-negative combinations of pure stabiliser projectors. The set $\mathcal M$ contains informationally complete measurements, so appending a non-stabiliser state strictly increases the relative entropy of magic. Piani's theorem has been applied to stabiliser states before, to show that the regularised relative entropy of magic is faithful~\cite{Veitch2014}, and Brand\~ao \emph{et al}'s generalisation covers stabiliser states with Pauli measurements~\cite{Brandao2020}. The version of \cref{eq:append} for the generalised robustness is the analogue of Piani's bound for the max-relative entropy (\cref{sec:append}). We have not found a bound of this kind on a robustness in the literature.

Errors on regions that can hold a logical operator are not covered by \cref{thm:coherent}, apart from rotations about a Pauli with a non-trivial syndrome (\cref{cor:pauli}), and are bounded instead by the magic capacity~\cite{SeddonCampbell2019}, the largest factor by which a gate can increase robustness. We prove that the magic capacity equals the robustness of the Choi state for every single-qubit gate, and for every two-qubit gate that is Clifford-equivalent to a diagonal or single-qubit gate, that is, equal to such a gate up to Cliffords before and after (\cref{prop:capacity1,thm:capacity2}). For a uniform rotation (the same rotation about a Pauli axis on every qubit), \cref{lem:sector} still gives the robustness after syndrome extraction, and on the bit-flip and $[[4,2,2]]$ codes we find in closed form that the decoded logical state can hold more or less magic than before the error (\cref{sec:clifford}). For Clifford errors on every qubit, syndrome extraction measures the conjugated stabiliser group, so by the rule for code deformation~\cite{Vuillot2019} it measures a logical operator when that group contains a non-trivial logical operator of the code; we write this as an $\Fq$-linear criterion (\cref{prop:measure,prop:F2}). On rotated surface codes of odd distance it shows that layers of $\pi/4$ rotations about $ZZ$, or about $XX$, do not measure a logical operator. Neither do full layers of controlled-$Z$ gates at distances 3, 5 and 7, while every horizontal or vertical CNOT circuit we tested does. For Pauli noise, the sector identity gives explicit magic lifetimes of single-qubit magic states stored in a code, the number of rounds until the state becomes a stabiliser mixture. When logical $Y$ errors are the least likely, as on rotated surface codes under low code-capacity depolarising noise, storing the state in the plane of the logical $X$ and $Z$ axes extends its lifetime by a constant factor (\cref{sec:memories}). Under a uniform rotation repeated in every round, the logical channel of the bit-flip code is a mixture of logical rotations: the robustness decays at the rate of the Pauli-twirled noise to leading order, but the rotation makes it oscillate, so that a stored state can regain magic after entering the stabiliser polytope and a stored stabiliser state gains magic (\cref{sec:memories}). For each odd prime dimension, we identify a state and a class of channels for which lifetimes measured by Wigner negativity and by robustness agree (\cref{sec:odd}).

This paper is laid out as follows. \Cref{sec:setting} sets up the sector identity. \Cref{sec:coherent} shows the effect of unitary errors and channels on correctable regions, with and without stabilisers inside the region (\cref{thm:coherent,prop:channel,cor:stabA}), and rotations about a Pauli with a non-trivial syndrome, and \cref{sec:append} proves the appending bound that makes the removed magic positive. \Cref{sec:capacity} bounds other errors on regions that can hold a logical operator through the magic capacity, and \cref{sec:clifford} treats errors on every qubit: uniform rotations, and Clifford errors after which syndrome extraction measures a logical operator. \Cref{sec:memories} gives magic lifetimes of stored single-qubit magic states under Pauli noise and under repeated uniform rotations, and \cref{sec:odd} extends \cref{thm:coherent} to odd prime dimension, and compares lifetimes defined there by Wigner negativity and by robustness.

\section{Setting and the sector identity}
\label{sec:setting}

We consider an $[[n,k]]$ stabiliser code with stabiliser group $\Sg$ (signs included), normaliser $\Nor$, and a Clifford encoding isometry $V_0=U_{\mathrm{enc}}(\,\cdot\otimes\ket{0^{n-k}})$. For each syndrome $s$ fix a Pauli $E_s$ with that syndrome (the decoder's correction), and let $V_s=E_sV_0$, so that the sector projectors are $\Pi_s=V_sV_s^\dagger$ and $\sum_s\Pi_s=\mathds{1}$. For a physical state $\rho$, the syndrome probabilities are $P(s)=\Tr(\Pi_s\rho)$, the decoded logical states are $\rho_L(s)=V_s^\dagger\rho V_s/P(s)$, and syndrome extraction with the record discarded is the dephasing $\EC(\rho)=\sum_s\Pi_s\rho\Pi_s$.

The Robustness of Magic of an $n$-qubit state is~\cite{HowardCampbell2017}
\begin{equation}
\RoM(\rho)=\min\Big\{\sum_i|x_i|:\ \rho=\sum_ix_i\sigma_i,\ \sigma_i\in\stab_n\Big\},
\label{eq:rom}
\end{equation}
with $\stab_n$ the pure $n$-qubit stabiliser states. This equals $1$ iff $\rho$ lies in the stabiliser polytope, the convex hull of $\stab_n$, whose elements we call stabiliser mixtures. It is submultiplicative, $\RoM(\rho_1\otimes\rho_2)\le\RoM(\rho_1)\RoM(\rho_2)$, convex, non-increasing under stabiliser channels, and non-increasing on average under stabiliser measurements~\cite{HowardCampbell2017,Regula2018}. By linear-programming duality~\cite{HowardCampbell2017},
\begin{equation}
\RoM(\rho)=\max\{\Tr(\Omega\rho):\ |\Tr(\Omega\sigma)|\le1\ \forall\sigma\in\stab_n\},
\label{eq:dual}
\end{equation}
with $\Omega$ ranging over Hermitian operators. For one qubit with Bloch vector $\mathbf{r}$, $\RoM(\rho)=\max(1,|r_x|+|r_y|+|r_z|)$~\cite{Seddon2021}. Following Ref.~\cite{HowardCampbell2017} we write $\ket{H}=(\ket0+e^{i\pi/4}\ket1)/\sqrt2$, with Bloch vector $(1,1,0)/\sqrt2$, which is Clifford-equivalent to Ref.~\cite{BravyiKitaev2005}'s Hadamard eigenstate, and write $\ket{F}$ for the face state with Bloch vector $(1,1,1)/\sqrt3$, Ref.~\cite{BravyiKitaev2005}'s $T$-type magic state. These have $\RoM=\sqrt2$ and $\sqrt3$ respectively.

\begin{lemma}[Sector identity]
\label{lem:sector}
For every physical state $\rho$,
\begin{equation}
\RoM(\EC\rho)=\sum_sP(s)\,\RoM(\rho_L(s))\le\RoM(\rho).
\label{eq:sector}
\end{equation}
In particular $\RoM(\rho)=\sum_sP(s)\RoM(\rho_L(s))$ after any Pauli channel acting on an encoded state.
\end{lemma}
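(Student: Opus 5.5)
The plan is to use a single Clifford change of basis that turns the dephased state into a classical–quantum state, and then to evaluate the robustness of such a state directly. Let $U_{\mathrm{enc}}$ be the Clifford encoder, and let $C_s$ be the Pauli on the ancilla register with $E_s U_{\mathrm{enc}} = U_{\mathrm{enc}}(P_s\otimes C_s)$ up to phase, where $P_s$ is a logical Pauli. We may choose the $E_s$ so that $P_s=\mathds{1}$: this amounts to taking $E_s = U_{\mathrm{enc}}(\mathds{1}\otimes X^{s})U_{\mathrm{enc}}^\dagger$ times a stabiliser, and any other choice differs from it by a logical Pauli, which is a Clifford on $\rho_L(s)$ and leaves $\RoM$ unchanged. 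Then $\Pi_s = U_{\mathrm{enc}}(\mathds{1}\otimes\proj{s})U_{\mathrm{enc}}^\dagger$, and conjugating by the Clifford $U_{\mathrm{enc}}^\dagger$, which preserves $\RoM$, gives
\begin{equation*}
U_{\mathrm{enc}}^\dagger\,\EC(\rho)\,U_{\mathrm{enc}}=\sum_s P(s)\,\rho_L(s)\otimes\proj{s}.
\end{equation*}
So the task reduces to showing $\RoM\bigl(\sum_s p_s\,\tau_s\otimes\proj{s}\bigr)=\sum_s p_s\RoM(\tau_s)$.

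For the upper bound, take optimal decompositions $\tau_s=\sum_i x_{s,i}\sigma_{s,i}$. Each product $\sigma_{s,i}\otimes\proj{s}$ is a pure stabiliser state, so convexity (or the explicit decomposition) gives $\RoM\le\sum_s p_s\RoM(\tau_s)$.

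For the lower bound I would use the dual \cref{eq:dual}. Take optimal witnesses $\Omega_s$ for the $\tau_s$ and set $\Omega=\sum_s\Omega_s\otimes\proj{s}$, so that $\Tr(\Omega\,\cdot)$ evaluates to $\sum_sp_s\RoM(\tau_s)$. The main obstacle is feasibility: we need $|\Tr(\Omega\sigma)|\le1$ for every pure stabiliser $\sigma$, including states entangled across the two registers. The handle on this is that $\Tr(\Omega\sigma)=\sum_s q_s\Tr(\Omega_s\sigma_s)$, where $q_s=\langle s|\Tr_L\sigma|s\rangle$ and $\sigma_s$ is the normalised conditional state on the logical register. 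Measuring a computational-basis register of a stabiliser state is a stabiliser measurement, so each $\sigma_s$ is a pure stabiliser state and $\sum_s q_s=1$. The bound then follows from the triangle inequality.

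Alternatively, the lower bound follows from monotonicity on average under stabiliser measurements: measure the register in the computational basis. Either route works; I would present the witness argument because it is self-contained. The inequality $\le\RoM(\rho)$ then holds because $\EC$ is a stabiliser channel: it can be implemented by measuring the stabiliser generators, which are Pauli measurements, and forgetting the outcomes.

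For the final claim, I would show that a Pauli channel applied to an encoded state $V_0\rho_LV_0^\dagger$ yields a mixture of the states $PV_0\rho_LV_0^\dagger P$. Each of these lies within a single sector, since a Pauli maps the code space into a single syndrome space. The resulting state is therefore block diagonal and fixed by $\EC$, so the inequality becomes an equality.
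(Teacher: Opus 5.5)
Your proof is correct, but it is organised differently from the paper's. The paper never passes to a classical--quantum form. It shows that each $V_s$ pulls a pure stabiliser state back to $c\,\tau$ with $\tau$ a pure stabiliser state, and that for a fixed stabiliser state these weights sum to one over $s$. Pulling an optimal decomposition of an arbitrary $\rho$ back into all sectors at once then gives $\sum_sP(s)\RoM(\rho_L(s))\le\RoM(\rho)$ by a purely primal argument. Applying the same step to $\EC\rho$, which has the same $P(s)$ and $\rho_L(s)$, gives the lower bound on $\RoM(\EC\rho)$, and convexity gives the upper bound.

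You instead conjugate by $U_{\mathrm{enc}}^\dagger$ to obtain $\sum_sP(s)\rho_L(s)\otimes\proj{s}$. You then prove additivity over the classical flag with the block-diagonal witness $\Omega=\sum_s\Omega_s\otimes\proj{s}$. Its feasibility rests on the same fact the paper uses: conditioning a stabiliser state on a computational-basis outcome leaves a stabiliser state. You use it on the dual side rather than the primal one.

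Your version buys an explicit, code-independent flag-additivity lemma with an optimal dual certificate for $\EC\rho$. The paper states that additivity only in words after its proof. The paper's version buys economy: one primal step yields both the equality and the inequality $\le\RoM(\rho)$, with no duality and no appeal to monotonicity under stabiliser channels. It also records along the way that the isometries $V_s$ preserve $\RoM$, a fact reused in the proof of \cref{thm:coherent}.

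Your appeal to channel monotonicity for $\RoM(\EC\rho)\le\RoM(\rho)$ is legitimate, since the paper lists it as a known property. You could make it self-contained by noting that $\EC$ is the average of conjugations by the elements of $\Sg$ (\cref{lem:twirl}). It therefore maps any decomposition of $\rho$ into pure stabiliser states to one of $\EC\rho$ with no larger $\ell_1$ norm.
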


\begin{proof}
Projecting part of a pure stabiliser state onto a computational basis state either gives zero or a multiple of a pure stabiliser state, and Pauli and Clifford conjugation preserve $\stab$. Therefore, $V_s^\dagger\sigma V_s=c\,\tau$ with $c\in[0,1]$ and $\tau$ a pure stabiliser state, for every $\sigma\in\stab_n$, and $V_s\tau V_s^\dagger\in\stab_n$ for $\tau\in\stab_k$. This second fact means $\RoM(V_s\rho_LV_s^\dagger)\le\RoM(\rho_L)$, and pulling back an optimal decomposition with the first gives the reverse inequality, so $V_s$ preserves $\RoM$. Pulling back an optimal decomposition $\rho=\sum_ix_i\sigma_i$ into every sector at once gives $P(s)\rho_L(s)=\sum_ix_ic_{s,i}\tau_{s,i}$ with $\sum_sc_{s,i}=1$, so $\sum_sP(s)\RoM(\rho_L(s))\le\sum_i|x_i|=\RoM(\rho)$. Applied to $\EC\rho$, which has the same $P(s)$ and $\rho_L(s)$, this means $\sum_sP(s)\RoM(\rho_L(s))\le\RoM(\EC\rho)$, and convexity implies the reverse. A Pauli maps the code space into a single sector, so after a Pauli channel $\EC\rho=\rho$.
\end{proof}

\Cref{lem:sector} is the additivity of robustness over a classical flag, here the syndrome sector, which holds in every dimension. For the $N$-qubit repetition code with stabilisers $Z_jZ_{j+1}$ (the bit-flip code for $N=3$), the sector-dephased states are the X-states. For these states de Oliveira Junior \emph{et al} proved the equality in \mbox{\cref{eq:sector}}, which with the one-qubit formula gives $\RoM$ in closed form, and the bound $\RoM\le\sqrt3$ (their Eq.~(3) and Corollary~6)~\mbox{\cite{deOliveiraJunior2026}}. They also used the inequality, with $\EC$ written as a twirl over the stabilisers, to bound the robustness of many-body states from below (their Eq.~(6)), and asked whether the stabilisers of other codes limit magic in the same way. Every state that commutes with the stabilisers of an $[[n,k]]$ code has $\EC\rho=\rho$, so by \mbox{\cref{eq:sector}} its robustness is at most the largest robustness of a $k$-qubit state, which is $\sqrt3$ for $k=1$. The difference $\RoM(\rho)-\RoM(\EC\rho)$ is the robustness held in coherences between sectors. This vanishes under Pauli noise, and \cref{sec:coherent,sec:append,sec:capacity,sec:clifford} study it for coherent errors.

\section{Coherent errors on a correctable region}
\label{sec:coherent}

We first consider a coherent error $U$ on a region $A$ that supports no logical operator and no stabiliser. Such regions, usually called correctable regions, exist in every stabiliser code. They are regions which are smaller than the minimum weight of the code's stabilisers and smaller than the code distance. In such regions neither coherent errors nor logical operators can fit. Write $U=\sum_{P\in\Pa_A}u_PP$ for its expansion over the Paulis supported on $A$, and let $\renc=V_0\rho_LV_0^\dagger$. We write $J_U=(\mathds{1}\otimes U)\proj{\Phi}(\mathds{1}\otimes U^\dagger)$ for the Choi state of $U$, a state of $2|A|$ qubits, with $\ket{\Phi}$ a product of $|A|$ Bell pairs. The condition we need is
\begin{equation*}
\Nor\cap\Pa_A=\{\mathds{1}\}\quad\text{(up to phases)}.
\tag{C}
\label{eq:C}
\end{equation*}
This says that the erasure of $A$ is correctable and that no stabiliser other than the identity is supported inside $A$. Every region with $|A|<d$, where $d$ is the distance of the code, satisfies the first part, and every region of a non-degenerate code with $|A|<d$ satisfies both. Equivalently, every code state is maximally mixed on $A$, since a Pauli has non-zero expectation in some code state if and only if it lies in $\Nor$.

\begin{theorem}[Coherent error on a correctable region]
\label{thm:coherent}
Under \eqref{eq:C}, for every logical state $\rho_L$ and every unitary $U$ on $A$,
\begin{equation}
\begin{aligned}
\RoM(U\renc U^\dagger)&=\RoM(J_U\otimes\rho_L),\\
\RoM(\EC(U\renc U^\dagger))&=\RoM(\rho_L).
\end{aligned}
\label{eq:main}
\end{equation}
The second identity needs only that the erasure of $A$ is correctable. The robustness removed by syndrome extraction, $\Delta(U,\rho_L)=\RoM(J_U\otimes\rho_L)-\RoM(\rho_L)$, satisfies
\begin{multline}
\max\big[\RoM(J_U)-\RoM(\rho_L),\ (\RoM(\rho_L)+1)(\mu(J_U)-1)\big]\\
\le\Delta(U,\rho_L)\le(\RoM(J_U)-1)\,\RoM(\rho_L),
\label{eq:gapbounds}
\end{multline}
with $\mu(J_U)\ge1$ the value of the linear program defined in \cref{thm:append}. The upper bound is attained for stabiliser $\rho_L$, and $\Delta(U,\rho_L)>0$ for every non-Clifford $U$ and every $\rho_L$.
\end{theorem}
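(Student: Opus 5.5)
The first identity in \cref{eq:main} will come from a code-space canonical form (the paper's \cref{lem:bell}, à la Fattal \emph{et al}). Under \eqref{eq:C} every code state is maximally mixed on $A$. This lets me choose the Clifford encoder so that $A$ is maximally entangled with a set $A'$ of $|A|$ physical qubits. Concretely, I will aim for a Clifford $W$ acting on the complement $\bar A$ with
\[
V_0\ket{\psi_L}=(\mathds{1}_A\otimes W)\big(\ket{\Phi}_{AA'}\otimes\ket{\psi_L}\otimes\ket{0\cdots0}\big).
\]
The construction runs as follows. The restricted Paulis $\{P|_A\}$ of stabiliser elements generate all of $\Pa_A$. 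This holds because $\Nor\cap\Pa_A$ is trivial: by the symplectic complement relation, the projection of $\Sg$ onto $A$ has commutant $\Nor\cap\Pa_A$. I then pick $2|A|$ stabiliser generators whose restrictions to $A$ are $X_j,Z_j$, and use them to build the Bell pairs. Since $\Sg\cap\Pa_A=\{\mathds{1}\}$, these generators have nontrivial support off $A$, and a Clifford on $\bar A$ maps their $\bar A$-parts to the Bell partners. The remaining generators and logicals are then supported on $\bar A$ and can be fixed by the rest of $W$.

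With this form in hand, $U$ acts only on $A$ and so commutes with $W$. Hence
\[
U\renc U^\dagger=W(J_U\otimes\rho_L\otimes\proj{0})W^\dagger ,
\]
up to reordering of the $A,A'$ factors in the Choi convention; any transpose is harmless via the symmetry of $\ket\Phi$. Clifford invariance of $\RoM$ and appending $\proj 0$ then give $\RoM(J_U\otimes\rho_L)$. For the appending step, $\RoM(\rho\otimes\proj0)=\RoM(\rho)$ holds because submultiplicativity gives one direction and the partial trace, a stabiliser channel, gives the other.

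For the second identity I will use \cref{lem:sector}, which reduces the task to computing $P(s)$ and $\rho_L(s)$. Expanding $U=\sum_Pu_PP$ gives
\[
\Pi_sU\renc U^\dagger\Pi_s=\sum_{P,Q:\,\mathrm{syn}=s}u_P\bar u_QP\renc Q .
\]
Correctability of $A$ means that any two Paulis on $A$ with equal syndrome differ by an element of $\Sg$ (up to phase), so $P\renc Q\propto E\renc E^\dagger$ for a fixed representative. Each $\rho_L(s)$ is therefore a Pauli (logical-trivial) image of $\rho_L$ after decoding, possibly up to a logical Pauli fixed by $s$, which preserves $\RoM$. The sum then gives $\sum_sP(s)\RoM(\rho_L)=\RoM(\rho_L)$.

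For \cref{eq:gapbounds}, the upper bound follows from submultiplicativity, $\RoM(J_U\otimes\rho_L)\le\RoM(J_U)\RoM(\rho_L)$. The first lower bound holds because tracing out the logical part, a stabiliser channel, gives $\RoM(J_U\otimes\rho_L)\ge\RoM(J_U)$. The second lower bound is \cref{thm:append} applied with $\sigma=J_U$, read as $\RoM(J_U\otimes\rho_L)+1\ge\mu(J_U)(\RoM(\rho_L)+1)$. Attainment for stabiliser $\rho_L$ is the case $\RoM(\rho_L)=1$, where the first lower bound and the upper bound coincide at $\RoM(J_U)-1$. Strict positivity uses the fact that for non-Clifford $U$ the Choi state $J_U$ is not a stabiliser state: it is pure, and a pure stabiliser Choi state corresponds to a Clifford by Choi–Jamiołkowski. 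Then $\mu(J_U)>1$ by \cref{thm:append}, so the second lower bound is positive.

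I expect the main obstacle to be the canonical-form step, specifically producing a Clifford on $\bar A$ alone that disentangles the $A$-Bell pairs while keeping the code structure, and checking that both halves of \eqref{eq:C} are exactly what it needs. For that step I would first try Fattal-style Gaussian elimination on the generator matrix restricted to $A$.
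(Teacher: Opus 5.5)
Your proposal is correct and follows the paper's proof essentially step for step. It uses the Bell-pair normal form (\cref{lem:bell}) with $W$ acting outside $A$ for the first identity, and for the second the fact that Paulis on $A$ with equal syndrome differ by a stabiliser, combined with \cref{lem:sector}; computing $\Pi_sU\renc U^\dagger\Pi_s$ directly rather than through the twirl formula of \cref{lem:twirl} is the same argument. The bounds and positivity then come from partial trace, submultiplicativity and \cref{thm:append}, exactly as in the paper.

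In your canonical-form step, the reason a Clifford on $\bar A$ can send the off-$A$ parts of your $2|A|$ chosen stabilisers to $X_{j'},Z_{j'}$ is not their nontrivial support. Because the full stabilisers commute, these parts inherit the commutation relations of $X_j,Z_j$ and are therefore independent, and that is what the mapping needs.
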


A coherent error on a region satisfying \eqref{eq:C} therefore adds one copy of its Choi state beside the logical state, while syndrome extraction removes that copy and nothing else (\cref{fig:schematic}). The magic it removes was never magic of the logical state. It is lost to any procedure that begins by extracting the syndrome, whether or not the record is kept, since every sector decodes to $\rho_L$ up to a logical Pauli. The amount removed is the same for every code in which $A$ satisfies \eqref{eq:C}.

The region $A$ need not be connected. For errors $U_1,\dots,U_\ell$ on disjoint regions whose union satisfies \eqref{eq:C}, with $U=U_1\otimes\cdots\otimes U_\ell$, $J_U=J_{U_1}\otimes\cdots\otimes J_{U_\ell}$ up to the order of the qubits, so the state after the error has the robustness of $\rho_L$ with one Choi state per error appended, and by repeated use of \cref{thm:append} $\RoM(U\renc U^\dagger)+1\ge\prod_i\mu(J_{U_i})\,(\RoM(\rho_L)+1)$, a bound that grows exponentially in $\ell$ for repeated copies of one non-Clifford error. Condition \eqref{eq:C} for each region separately is not enough. On the $[[4,2,2]]$ code, rotations $e^{i\pi X/8}$ of qubits 1 and 2 give $\RoM(U\renc U^\dagger)=1+1/\sqrt2$ for $\rho_L=\ket{H}\otimes\ket{0}$, with $\ket{H}$ on the logical qubit whose $X$ operator is $X_1X_2$, rather than $\RoM(\ket{H}^{\otimes3})=2.219$, because $X_1X_2$ is a logical operator. Syndrome extraction then makes an unsharp measurement of $X_1X_2$, with effects $(3-X_1X_2)/4$ and $(1+X_1X_2)/4$ on the code space, and returns $(1+\sqrt2)/2$ rather than $\RoM(\rho_L)=\sqrt2$.

\begin{figure}[t]
\centering
\begin{tikzpicture}[scale=0.9,every node/.style={font=\small}]
  \draw[rounded corners=3pt,thick] (0,0) rectangle (3.0,1.6);
  \node at (1.5,1.85) {code state $\renc$};
  \fill[orange!25] (0.15,0.15) rectangle (0.95,1.45);
  \draw[orange!70!black,thick] (0.15,0.15) rectangle (0.95,1.45);
  \node at (0.55,0.8) {$A$};
  \node[draw,thick,fill=white,minimum width=0.55cm] (U) at (0.55,-0.75) {$U$};
  \draw[-{Stealth},thick] (U.north) -- (0.55,0.12);
  \node at (2.0,0.8) {$\rho_L$};
  \draw[-{Stealth},thick] (3.25,0.8) -- (4.05,0.8) node[midway,above] {$W$};
  \begin{scope}[xshift=4.3cm]
    \draw[rounded corners=3pt,thick] (0,0) rectangle (3.4,1.6);
    \fill[orange!25] (0.15,0.15) rectangle (1.45,1.45);
    \draw[orange!70!black,thick] (0.15,0.15) rectangle (1.45,1.45);
    \node at (0.8,1.1) {$J_U$};
    \node at (0.8,0.5) {\scriptsize $A\,A'$};
    \node at (2.45,0.8) {$V_0''\rho_LV_0''^\dagger$};
  \end{scope}
  \draw[-{Stealth},thick] (6.0,-0.15) -- (6.0,-0.75);
  \node[align=center] at (6.0,-1.15) {syndrome extraction:\\ $\RoM(J_U\otimes\rho_L)\ \to\ \RoM(\rho_L)$};
\end{tikzpicture}
\caption{Schematic of \cref{thm:coherent}, for a region $A$ satisfying \eqref{eq:C}. A Clifford $W$ acting outside $A$ maps every code state to Bell pairs between $A$ and a set $A'$, next to the logical state encoded in a smaller code with encoder $V_0''$ (\cref{lem:bell}). A unitary $U$ on $A$ turns the Bell pairs into its Choi state $J_U$, so the state after the error has the robustness of $J_U\otimes\rho_L$. Syndrome extraction twirls $U$ into a Pauli mixture and returns $\RoM(\rho_L)$.}
\label{fig:schematic}
\end{figure}

The proof of \cref{thm:coherent} uses two further lemmas.

\begin{lemma}[Syndrome extraction twirls the error]
\label{lem:twirl}
For any operator $O$, $\EC(O)=|\Sg|^{-1}\sum_{g\in\Sg}gOg$. Under \eqref{eq:C}, $\EC(U\renc U^\dagger)=\sum_{P\in\Pa_A}|u_P|^2P\renc P$. If the erasure of $A$ is correctable but $A$ may support stabilisers, $\EC(U\renc U^\dagger)$ is still a mixture $\sum_sw_sP_s\renc P_s$ of Paulis $P_s$ in distinct sectors.
\end{lemma}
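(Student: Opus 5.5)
We prove the three claims of \cref{lem:twirl} in turn; all three follow from writing the syndrome projectors as averages over the stabiliser group.

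\textbf{Step 1: the twirl formula.} For each syndrome $s$, the sector projector can be written as
\begin{equation*}
\Pi_s=|\Sg|^{-1}\sum_{g\in\Sg}\chi_s(g)\,g,
\end{equation*}
where $\chi_s(g)=\pm1$ is the eigenvalue of $g$ on sector $s$. This is the standard character expansion of the projectors onto joint eigenspaces of an abelian group. Substituting it into $\EC(O)=\sum_s\Pi_sO\Pi_s$ gives
\begin{equation*}
\EC(O)=|\Sg|^{-2}\sum_{g,h}\Big(\sum_s\chi_s(g)\chi_s(h)\Big)\,gOh .
\end{equation*}
The syndromes range over all characters of $\Sg$ that are compatible with the signs of its generators, and there are $|\Sg|$ of them. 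Character orthogonality then gives $\sum_s\chi_s(g)\chi_s(h)=|\Sg|\,\delta_{g,h^{-1}}$. Since $h^{-1}=h$ for Hermitian Paulis, only the terms with $g=h$ survive, and $\EC(O)=|\Sg|^{-1}\sum_g gOg$. The only point needing care here is the signs included in $\Sg$; for this reason I would phrase the argument with characters on an independent generating set, so that $-\mathds{1}\notin\Sg$ is automatic.

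\textbf{Step 2: the formula under \eqref{eq:C}.} Expand
\begin{equation*}
U\renc U^\dagger=\sum_{P,Q\in\Pa_A}u_P\bar u_Q\,P\renc Q .
\end{equation*}
For each $g\in\Sg$ we have $gPg=\pm P$, and $g$ acts trivially on $\renc$, that is $g\renc=\renc g=\renc$. Hence each cross term transforms as
\begin{equation*}
g\,P\renc Q\,g=\epsilon_g(P)\,\epsilon_g(Q)\,P\renc Q,
\end{equation*}
where $\epsilon_g(P)=\pm1$ records whether $g$ commutes or anticommutes with $P$. Averaging over $g\in\Sg$ keeps a term exactly when $PQ$ commutes with all of $\Sg$, i.e.\ $PQ\in\Nor$. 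Now $PQ\in\Pa_A$, so \eqref{eq:C} forces $PQ\propto\mathds{1}$, i.e.\ $P=Q$. The diagonal terms survive with coefficient $|u_P|^2$, which gives the stated formula $\EC(U\renc U^\dagger)=\sum_{P\in\Pa_A}|u_P|^2P\renc P$.

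\textbf{Step 3: the weaker hypothesis.} Now suppose only that the erasure of $A$ is correctable. Then $\Nor\cap\Pa_A\subseteq\Sg$ up to phases, since any element of $\Nor$ supported on $A$ that is not a stabiliser would be a logical operator on $A$. Surviving cross terms therefore have $PQ=\lambda g_0$ with $g_0\in\Sg$, and then
\begin{equation*}
P\renc Q=\lambda'\,P\renc P
\end{equation*}
for some phase $\lambda'$, because $g_0$ absorbs into $\renc$. Group the Paulis $P\in\Pa_A$ into cosets of $\Sg\cap\Pa_A$; Paulis in the same coset have the same syndrome. Within a coset the surviving contributions sum to $\big|\sum_P u_P\lambda_P\big|^2\,P_s\renc P_s$ for a coset representative $P_s$. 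Distinct cosets have distinct syndromes: if two Paulis had the same syndrome, their product would lie in $\Nor\cap\Pa_A\subseteq\Sg$, placing them in the same coset. This yields the mixture $\sum_s w_sP_s\renc P_s$ with the $P_s$ in distinct sectors.

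The main obstacle is the phase bookkeeping in this last step. One has to check that the combined weights $w_s$ are nonnegative and sum to one. Nonnegativity comes from writing the coset sum as a squared modulus, as above. For normalisation I would use trace preservation of $\EC$, rather than computing $\sum_s w_s$ directly.
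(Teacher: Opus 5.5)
Your proof is correct and follows the paper's argument. The twirl formula comes from the same character expansion of $\Pi_s$. The last claim comes from the same cross-term bookkeeping: a term $u_P\bar u_Q P\renc Q$ survives the average only if $PQ\in\Nor$, and $PQ\in\Sg\cap\Pa_A$ then collapses each coset to a non-negative multiple of $P_s\renc P_s$.

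The one difference is in the second claim. The paper reduces the average over $\Sg$ to an average over the image $G_A$ of the restriction map $g\mapsto g_A$, and notes that \eqref{eq:C} forces $G_A=\Pa_A$, so the average is a full Pauli twirl on $A$. It reuses that surjectivity in the proof of \cref{lem:bell}. Your direct character sum over cross terms reaches the same formula without needing that fact.
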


\begin{lemma}[Bell-pair normal form]
\label{lem:bell}
Under \eqref{eq:C} there are a set $A'$ of $|A|$ qubits outside $A$, a Clifford $W$ acting outside $A$, and a stabiliser code $\Sg''$ on the remaining $n-2|A|$ qubits with Clifford encoder $V_0''$, such that $(\mathds{1}_A\otimes W)V_0\rho_LV_0^\dagger(\mathds{1}_A\otimes W)^\dagger=\proj{\Phi}_{AA'}\otimes V_0''\rho_LV_0''^\dagger$ for every $\rho_L$.
\end{lemma}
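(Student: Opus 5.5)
The proof works in the symplectic picture, using condition \eqref{eq:C} to show that the stabiliser group restricted to $A$ is the whole Pauli group on $A$. We then pair each qubit of $A$ with a qubit outside $A$ by a Clifford acting only on the complement. Let $\pi_A:\Sg\to\Pa_A$ be restriction to $A$ (phases dropped), viewed as a linear map into $\Fq^{2|A|}$.

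\emph{Step 1: $\pi_A(\Sg)$ is all of $\Pa_A$.} Take any $P\in\Pa_A$ that commutes with every element of $\pi_A(\Sg)$. Then $P$ commutes with every element of $\Sg$, since the parts outside $A$ act trivially against $P$. So $P\in\Nor\cap\Pa_A=\{\mathds{1}\}$ by \eqref{eq:C}. Hence the symplectic complement of the subspace $\pi_A(\Sg)\subseteq\Fq^{2|A|}$ is zero, and so $\pi_A(\Sg)$ is the whole space. The kernel of $\pi_A$ is the subgroup $\Sg_{\bar A}$ of stabilisers supported outside $A$.

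\emph{Step 2: lifting a canonical basis.} Choose $g^X_j,g^Z_j\in\Sg$ with $\pi_A(g^X_j)=X_j$ and $\pi_A(g^Z_j)=Z_j$ for $j\in A$. Write $g^X_j=X_j\otimes h^X_j$ and $g^Z_j=Z_j\otimes h^Z_j$, with $h^{X,Z}_j$ supported on the complement $\bar A$. The $g$'s commute pairwise, so the $h$'s have exactly the commutation pattern of the $X_j,Z_j$: $h^X_j$ and $h^Z_j$ anticommute, and all other pairs commute. Thus they form $|A|$ canonical conjugate pairs on $\bar A$. Every element of $\Sg_{\bar A}$ commutes with each $g$ and is trivial on $A$, so it commutes with every $h$. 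For each logical Pauli $\bar L$ we can, by Step 1, multiply by a product of $g$'s to make it trivial on $A$. The resulting representative lies in $\Nor$ and is supported on $\bar A$, so it also commutes with every $h$. These representatives still realise the same logical action on the code space.

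\emph{Step 3: the Clifford $W$.} By symplectic Gram--Schmidt on $\bar A$ there is a Clifford $W$ on $\bar A$ with $Wh^X_jW^\dagger=\pm X_{a'_j}$ and $Wh^Z_jW^\dagger=\pm Z_{a'_j}$, for a set $A'=\{a'_j\}\subseteq\bar A$ of $|A|$ qubits. This $W$ maps the symplectic complement of the $h$'s onto Paulis on the remaining $n-2|A|$ qubits. Composing $W$ with a Pauli on $A'$ fixes the signs. Then the conjugated stabilisers include $X_jX_{a'_j}$ and $Z_jZ_{a'_j}$ with $+$ signs, which stabilise $\ket{\Phi}_{AA'}$. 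The images of $\Sg_{\bar A}$ give a stabiliser group $\Sg''$ on the remaining qubits. Counting generators, $\Sg''$ has $n-k-2|A|$ of them, so it encodes $k$ qubits. The images of the cleaned logical representatives are logical Paulis of $\Sg''$.

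\emph{Step 4: matching encoders.} The main obstacle is fixing the encoder so the identity holds for every $\rho_L$, not just the code spaces. I will define $V_0''$ as the unique isometry (up to global phase) onto the code space of $\Sg''$ that intertwines the logical Paulis $\bar X_i,\bar Z_i$ with their $W$-images. Then both $(\mathds{1}_A\otimes W)V_0$ and $\ket{\Phi}_{AA'}\otimes V_0''$ are isometries onto the same subspace that intertwine the same logical Pauli group. They therefore agree up to a global phase, which cancels in $V\rho_LV^\dagger$. This $V_0''$ is Clifford because it is a composition of Cliffords with an ancilla preparation.
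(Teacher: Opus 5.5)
Your proposal is correct and follows essentially the paper's proof: the restriction map $\Sg\to\Pa_A$ is onto by \eqref{eq:C}, $X_j$ and $Z_j$ are lifted to stabilisers whose parts outside $A$ form canonical conjugate pairs, and a Clifford on the complement sends these to fresh qubits $A'$, with the signs fixed by a Pauli. The differences are cosmetic. You clean logical operators off $A$ with the lifted stabilisers before applying $W$, and you pin down $V_0''$ by a uniqueness-of-intertwiner argument, whereas the paper reads $V_0''$ off directly from the image of $(\mathds{1}_A\otimes W)V_0$ being $\ket{\Phi}\otimes\mathcal{H}''$.
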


\Cref{lem:twirl} is an exact version, for a region, of the observation that Pauli components of a coherent error with different syndromes are projected onto separate Pauli errors by syndrome extraction~\cite{Greenbaum2018,Iverson2020}. \Cref{lem:bell} is the canonical form of Fattal \emph{et al}~\cite{Fattal2004} (their Theorem~1) for bipartite stabiliser states, extended to code spaces with the Clifford on one side only. For general codes, a unitary on the complement of a correctable region likewise splits every code state into the logical state and a fixed state of the region and part of its complement, not necessarily Bell pairs (condition~(1) of Theorem~3.1 of Ref.~\cite{Harlow2017}). The proofs of \cref{lem:twirl,lem:bell} are in Appendix~\ref{app:bell}.

\begin{proof}[Proof of \cref{thm:coherent}]
By \cref{lem:bell}, $W$ commutes with $U$, so $(\mathds{1}\otimes W)U\renc U^\dagger(\mathds{1}\otimes W)^\dagger=(U\otimes\mathds{1})\proj{\Phi}(U^\dagger\otimes\mathds{1})\otimes V_0''\rho_LV_0''^\dagger$. The first factor is $J_U$ up to the exchange of $A$ and $A'$, a Clifford, and Clifford isometries preserve $\RoM$ (proof of \cref{lem:sector}). The second identity follows from \cref{lem:twirl,lem:sector}: each $P_s\renc P_s$ decodes to $\rho_L$ up to a logical Pauli. The lower bounds in \cref{eq:gapbounds} are monotonicity under partial trace and \cref{thm:append}. The upper bound is submultiplicativity, and for stabiliser $\rho_L$ it is attained because $\RoM(J_U)\le\RoM(J_U\otimes\rho_L)\le\RoM(J_U)$. Positivity follows from \cref{thm:append}: $J_U$ is pure, so it is a stabiliser mixture only if it is a stabiliser state, which happens only when $U$ is Clifford.
\end{proof}

The removed magic can also be isolated before extraction: $W$ depends only on the code and on $A$, so applying it after the error leaves $J_U$ on $A\cup A'$, next to the logical state in the smaller code $\Sg''$. This requires $A$ to be known and reduces the protection of the logical state: for the $[[5,1,3]]$ code and $|A|=2$, the logical qubit is left unencoded. Replacing $J_U$ by fresh Bell pairs and applying $W^\dagger$ returns $\renc$, as in the correction of an erasure, so the protection is lower only between the two applications of $W$. Isolating $J_U$ can be useful when $U$ and $A$ are known, as for a calibrated over-rotation on a known region: before it is replaced, $J_U$ can be used to apply $U$ elsewhere by gate teleportation~\cite{GottesmanChuang1999}, deterministically with Clifford corrections when $U$ is in the third level of the Clifford hierarchy and with heralded success otherwise.

The proof of \cref{thm:coherent} uses only that $W$ acts outside $A$, so it extends to channels.

\begin{proposition}[Channels on the region]
\label{prop:channel}
Under \eqref{eq:C}, for every channel $\mathcal{N}$ on $A$ and every $\rho_L$, \cref{eq:main,eq:gapbounds} hold with $\mathcal{N}(\renc)$ in place of $U\renc U^\dagger$ and $J_{\mathcal{N}}=(\mathrm{id}\otimes\mathcal{N})(\proj{\Phi})$ in place of $J_U$. The excess $\RoM(J_{\mathcal{N}}\otimes\rho_L)-\RoM(\rho_L)$ is positive if and only if $J_{\mathcal{N}}$ is not a stabiliser mixture, that is, if and only if $\mathcal{N}$ is not completely stabiliser-preserving (Theorem~3.1 of Ref.~\cite{SeddonCampbell2019}).
\end{proposition}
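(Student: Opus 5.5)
The plan is to run the proof of \cref{thm:coherent} again, replacing the unitary by a channel, and to check that each step used only that the error acts on $A$ and commutes with the Clifford $W$ of \cref{lem:bell}. Since $W$ acts outside $A$ and $\mathcal{N}$ acts on $A$, the maps $\mathcal{N}\otimes\mathrm{id}$ and $\mathrm{Ad}_W$ commute. \Cref{lem:bell} then gives
\begin{equation*}
(\mathds{1}\otimes W)\,\mathcal{N}(\renc)\,(\mathds{1}\otimes W)^\dagger=(\mathcal{N}\otimes\mathrm{id})(\proj{\Phi}_{AA'})\otimes V_0''\rho_LV_0''^\dagger .
\end{equation*}
The first factor is $J_{\mathcal{N}}$ up to a swap of $A$ and $A'$. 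This uses the symmetry $(X\otimes\mathds{1})\ket{\Phi}=(\mathds{1}\otimes X^T)\ket{\Phi}$, and at the level of states it is simply the swap of the two halves of the Bell pairs, which is a Clifford. Clifford conjugation and the isometry $V_0''$ preserve $\RoM$, as shown in the proof of \cref{lem:sector}. This gives the first identity in \cref{eq:main} with $J_{\mathcal{N}}$ in place of $J_U$.

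For the second identity, I will write $\mathcal{N}$ in Kraus form, $\mathcal{N}(\cdot)=\sum_jK_j\cdot K_j^\dagger$, with each $K_j=\sum_{P\in\Pa_A}k_{j,P}P$ supported on $A$. By \cref{lem:twirl}, $\EC$ is the twirl over $\Sg$, and it is linear. Under \eqref{eq:C}, distinct Paulis on $A$ have distinct syndromes, so the cross terms $P\renc Q^\dagger$ with $P\neq Q$ are killed. The result is $\EC(\mathcal{N}(\renc))=\sum_P\big(\sum_j|k_{j,P}|^2\big)P\renc P$. If $A$ is only erasure-correctable, the same computation sector by sector gives a mixture $\sum_sw_sP_s\renc P_s$, exactly as in \cref{lem:twirl}. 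Every such sector decodes to $\rho_L$ up to a logical Pauli, and logical Paulis preserve $\RoM$. \Cref{lem:sector} then returns $\RoM(\rho_L)$.

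The bounds in \cref{eq:gapbounds} transfer as before. The lower bound $\RoM(J_{\mathcal{N}})-\RoM(\rho_L)$ follows because $\RoM$ is monotone under partial trace, which is a stabiliser channel. The second lower bound is \cref{thm:append} with $\sigma=J_{\mathcal{N}}$. The upper bound is submultiplicativity. When $\rho_L$ is stabiliser, the chain $\RoM(J_{\mathcal{N}})\le\RoM(J_{\mathcal{N}}\otimes\rho_L)\le\RoM(J_{\mathcal{N}})$ gives equality.

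The one genuinely new step is the characterisation of positivity, and this is where I expect the main care to be needed. Here $J_{\mathcal{N}}$ is mixed, so I cannot argue, as in the unitary case, that a pure stabiliser mixture must be a stabiliser state. For the forward direction: if $J_{\mathcal{N}}$ is not a stabiliser mixture, then $\mu(J_{\mathcal{N}})>1$ by \cref{thm:append}, and the second lower bound forces $\Delta>0$. For the converse: if $J_{\mathcal{N}}$ is a stabiliser mixture, then $\RoM(J_{\mathcal{N}})=1$, and the upper bound $(\RoM(J_{\mathcal{N}})-1)\RoM(\rho_L)=0$ forces $\Delta=0$. Finally, the equivalence between $J_{\mathcal{N}}$ being a stabiliser mixture and $\mathcal{N}$ being completely stabiliser-preserving is exactly Theorem~3.1 of Ref.~\cite{SeddonCampbell2019}, which I would cite rather than reprove. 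The only check needed is that their Choi normalisation, a product of maximally entangled pairs, matches ours.
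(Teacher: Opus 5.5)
Your proposal is correct and follows the paper's proof essentially step for step. You commute the channel past $W$ and invoke \cref{lem:bell} for the first identity, apply the twirl computation of \cref{lem:twirl} to each Kraus operator and then \cref{lem:sector} for the second, and use \cref{thm:append} for positivity and submultiplicativity (forcing zero excess) for the converse, with the transpose identity being superfluous since the swap invariance of $\ket{\Phi}$ alone turns $(\mathcal{N}\otimes\mathrm{id})(\proj{\Phi})$ into $J_{\mathcal{N}}$.
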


\begin{proof}
Write $\mathcal{N}(\cdot)=\sum_iK_i\cdot K_i^\dagger$ with each $K_i$ acting on $A$. Each $K_i$ commutes with $W$, so $(\mathds{1}\otimes W)\mathcal{N}(\renc)(\mathds{1}\otimes W)^\dagger=\sum_i(K_i\otimes\mathds{1})\proj{\Phi}(K_i^\dagger\otimes\mathds{1})\otimes V_0''\rho_LV_0''^\dagger$, and the first factor is $J_{\mathcal{N}}$ up to the exchange of $A$ and $A'$. The proof of \cref{lem:twirl} applies to each $K_i$, so $\EC(\mathcal{N}(\renc))$ is a mixture of Paulis on $A$ applied to $\renc$, and \cref{lem:sector} gives the second identity. The bounds follow as in the proof of \cref{thm:coherent}, and \cref{thm:append} gives a positive excess whenever $J_{\mathcal{N}}$ is not a stabiliser mixture. If it is one, submultiplicativity gives $\RoM(J_{\mathcal{N}}\otimes\rho_L)\le\RoM(\rho_L)$, so the excess is zero.
\end{proof}

A channel that is not completely stabiliser-preserving takes some pure stabiliser state of an ancilla and $A$ outside the stabiliser polytope. Amplitude damping on one qubit of the $[[4,2,2]]$ code, for example, gives a positive excess, while resetting that qubit gives none (accompanying notebook).

The first identity of \cref{eq:main} uses only that $\RoM$ is unchanged by Clifford unitaries and by appending pure stabiliser states, so it holds for every function of the state with these properties, such as the generalised robustness and, for pure states, the stabiliser R\'enyi entropies $M_\alpha$~\cite{Leone2022}. For pure $\rho_L$ every syndrome outcome leaves $\rho_L$ up to a logical Pauli, so the syndrome-averaged $M_\alpha$ of the decoded logical states equals $M_\alpha(\rho_L)$. Refs.~\cite{Niroula2024,SierantTurkeshi2026} study this average for $\alpha=2$ with errors on every qubit, where \eqref{eq:C} fails and the average can exceed $M_2(\rho_L)$. Since $M_\alpha$ is additive, the excess $M_\alpha(J_U\otimes\rho_L)-M_\alpha(\rho_L)=M_\alpha(J_U)$ does not depend on the logical state. For pure stabiliser $\rho_L$, \eqref{eq:C} makes $A$ maximally entangled with the rest of the code block, and the identity for $M_\alpha$ then follows from a result of Hou \emph{et al}: a unitary on a region of a stabiliser state that is maximally entangled with its complement adds $M_\alpha(J_U)$, which equals their unitary stabiliser R\'enyi entropy of $U$~\cite{Hou2026}. The robustness is only submultiplicative, which is why \cref{thm:append} is needed.

\begin{corollary}[Stabilisers in the region]
\label{cor:stabA}
Suppose only that the erasure of $A$ is correctable. Let $\Sg_A=\Sg\cap\Pa_A$ have $a$ independent generators, and let $\Pi_A$ be the projector onto the subspace of $A$ fixed by $\Sg_A$, signs included. Then for every $\rho_L$ and every unitary $U$ on $A$, \cref{eq:main,eq:gapbounds} hold with $J_U^A=2^a(\mathds{1}\otimes U\Pi_A)\proj{\Phi}(\mathds{1}\otimes\Pi_AU^\dagger)$, the Choi state of $U$ restricted to the range of $\Pi_A$, in place of $J_U$, and $\Delta(U,\rho_L)>0$ if and only if $J_U^A$ is not a stabiliser state.
\end{corollary}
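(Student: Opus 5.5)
The plan is to reduce \cref{cor:stabA} to \cref{thm:coherent} by first splitting off the stabilisers supported inside $A$. Choose a Clifford $C_A$ acting only on $A$ that maps the $a$ independent generators of $\Sg_A$, signs included, to single-qubit $Z$'s on a subset $A_1\subset A$ of $a$ qubits. Any independent commuting set of Paulis can be mapped this way by a Clifford on their support. After this change, every code state is $\proj{0^a}_{A_1}\otimes\tilde\rho$, where $\tilde\rho$ lives on $A_2=A\setminus A_1$ and the rest of the block.

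Next I check that $A_2$ satisfies \eqref{eq:C} for the residual code obtained by deleting $A_1$. There are two things to verify.
\begin{itemize}
\item No stabiliser survives on $A_2$: all stabilisers supported in $A$ are generated by $\Sg_A$, and these now sit on $A_1$.
\item No logical operator fits on $A_2$: correctability of $A$ gives $\Nor\cap\Pa_A\subset\Sg$ up to phases.
\end{itemize}
I also need the residual stabiliser group to contain no element with support partly on $A_1$. This holds after multiplying by the $Z_{A_1}$ generators, since every stabiliser commutes with them.

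\Cref{lem:bell} then gives a Clifford $W$ outside $A$, a set $A'$ of $|A|-a$ qubits, and a smaller code with encoder $V_0''$. With these,
\[
(\mathds{1}_A\otimes W)\renc(\mathds{1}_A\otimes W)^\dagger=\proj{0^a}_{A_1}\otimes\proj{\Phi}_{A_2A'}\otimes V_0''\rho_LV_0''^\dagger .
\]
Since $W$ commutes with $U$, applying $U$ gives $U(\proj{0^a}\otimes\proj{\Phi})U^\dagger\otimes V_0''\rho_LV_0''^\dagger$. Undoing $C_A$ identifies the first factor with $J_U^A$ up to Cliffords and relabelling. The reason is that $\proj{0^a}\otimes\proj{\Phi}_{A_2A'}$ equals $2^a(\Pi_A\otimes\mathds{1})\proj{\Phi}_{AA''}(\Pi_A\otimes\mathds{1})$ after tracing out or Clifford-disentangling the $a$ extra reference qubits of a full Bell state on $A$. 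Equivalently, $(\Pi_A\otimes\mathds{1})\ket{\Phi}$ is $2^{-a/2}$ times a state with $a$ reference qubits in a fixed stabiliser state, which can be removed without changing $\RoM$.

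This bookkeeping, matching the normalisation $2^a$ and showing the extra reference qubits factor off as a stabiliser state, is the step I expect to need the most care. I would handle it by the transpose trick: $(\Pi_A\otimes\mathds{1})\ket{\Phi}=(\mathds{1}\otimes\Pi_A^T)\ket{\Phi}$, and $\Pi_A^T$ is again a stabiliser projector, so it maps Bell pairs to $\ket{0^a}\otimes$Bell pairs under Clifford conjugation. Clifford invariance and invariance under appending pure stabiliser states (proof of \cref{lem:sector}) then give $\RoM(U\renc U^\dagger)=\RoM(J_U^A\otimes\rho_L)$.

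The second identity of \cref{eq:main} is already covered, since \cref{thm:coherent} states it needs only correctability. Concretely, \cref{lem:twirl} gives a mixture of Paulis in distinct sectors, each decoding to $\rho_L$ up to a logical Pauli, and \cref{lem:sector} finishes it. The bounds \cref{eq:gapbounds} then follow verbatim as in the proof of \cref{thm:coherent}: monotonicity under partial trace, \cref{thm:append}, and submultiplicativity.

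For the last claim, $J_U^A$ is pure. Hence by \cref{thm:append}, $\Delta>0$ whenever $J_U^A$ is not a stabiliser state. Conversely, if $J_U^A$ is a stabiliser state, submultiplicativity gives $\RoM(J_U^A\otimes\rho_L)\le\RoM(\rho_L)$, so $\Delta=0$.
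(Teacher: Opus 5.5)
Your proof is correct, but it reaches the paper's normal form by a different route.

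\textbf{The paper's route.} The paper never acts on $A$. It uses correctability to show that the image $G_A$ of the restriction map $\Sg\to\Pa_A$ is the commutant of $\Sg_A$. It then picks symplectic pairs $x_j,z_j$ that generate $G_A$ together with $\Sg_A$, and re-runs the construction in the proof of \cref{lem:bell}. This yields $W$ and a state $\ket{\Phi_A}$ with stabiliser group $\langle\Sg_A,x_jX_{j'},z_jZ_{j'}\rangle$. Finally it matches $\ket{\Phi_A}$ to $2^{a/2}(\mathds{1}\otimes\Pi_A)\ket{\Phi}$ by applying the same construction to the latter, viewed as a code with no logical qubits.

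\textbf{Your route.} You use a Clifford $C_A$ on $A$ only as a change of frame. This lets you invoke \cref{lem:bell} as a black box on the residual code. Your check of \eqref{eq:C} for $A_2$ is right: a residual logical operator on $A_2$ commutes with the $Z$'s on $A_1$, so up to phases it lies in $N(C_A\Sg C_A^\dagger)\cap\Pa_A=\langle Z_{A_1}\rangle$. The frame change also makes the factor $2^a$ transparent, since $2^{a/2}\Pi'_A\ket{\Phi}=\ket{0^a}_{A_1}\ket{0^a}_{R_1}\ket{\Phi}_{A_2R_2}$ with $\Pi'_A=C_A\Pi_AC_A^\dagger$.

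\textbf{How they relate.} $W$ acts outside $A$, so it commutes with $C_A$. Hence $C_A^\dagger(\ket{0^a}_{A_1}\ket{\Phi}_{A_2A'})$ is exactly the paper's $\ket{\Phi_A}$ for the choice $x_j=C_A^\dagger X_jC_A$, $z_j=C_A^\dagger Z_jC_A$ ($j\in A_2$), and the physical decoupler is the same. The paper's version avoids the detour through $C_A$ and exhibits the stabiliser group of $\ket{\Phi_A}$ directly. Yours is more modular and makes the normalisation check explicit.

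\textbf{One sentence to fix.} The sentence ``$W$ commutes with $U$, applying $U$ gives $U(\proj{0^a}\otimes\proj{\Phi})U^\dagger$'' mixes frames. In the rotated frame the error is $U'=C_AUC_A^\dagger$, and $C_A$ does not commute with $U$. There are two easy repairs:
\begin{itemize}
\item undo $C_A$ on the state before applying $U$; or
\item use $(\mathds{1}\otimes U\Pi_A)\ket{\Phi}=(C_A^T\otimes C_A^\dagger)(\mathds{1}\otimes U'\Pi'_A)\ket{\Phi}$, which is your transpose trick.
\end{itemize}
Both $C_A^T$ and $C_A^\dagger$ are Cliffords, so $\RoM$ is unchanged and the rest of your argument goes through as written.
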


\begin{proof}
The image $G_A$ of the restriction map $\Sg\to\Pa_A$ has commutant $\Nor\cap\Pa_A=\Sg_A$ (proof of \cref{lem:twirl}), so it is the commutant of $\Sg_A$ in $\Pa_A$. Choose Paulis $x_j,z_j$ on $A$, $j=1,\dots,|A|-a$, with the commutation relations of $X_j,Z_j$, that commute with $\Sg_A$ and generate $G_A$ together with it. As in the proof of \cref{lem:bell}, the parts outside $A$ of their preimages in $\Sg$ are independent with the same relations, and a Clifford $W$ outside $A$ sends them to $X_{j'},Z_{j'}$ on a set $A'$ of $|A|-a$ qubits. Every stabiliser and every logical operator restricts on $A$ to an element of $G_A$, so it can be multiplied by elements of $\Sg_A$ and by the $x_jX_{j'}$, $z_jZ_{j'}$ to act trivially on $A\cup A'$. Hence $(\mathds{1}\otimes W)\renc(\mathds{1}\otimes W)^\dagger=\proj{\Phi_A}\otimes V_0''\rho_LV_0''^\dagger$, with $\ket{\Phi_A}$ the stabiliser state of $A\cup A'$ with group $\langle\Sg_A,x_jX_{j'},z_jZ_{j'}\rangle$. The stabiliser state $2^{a/2}(\mathds{1}\otimes\Pi_A)\ket{\Phi}$, viewed as a code with no logical qubits, has $\Sg_A$ as its stabilisers supported on $A$, so the same construction, with the same $x_j,z_j$, maps it to $\ket{\Phi_A}$ next to a stabiliser state by a Clifford on the reference. A Clifford isometry on the reference therefore maps $(U\otimes\mathds{1})\ket{\Phi_A}$ to $J_U^A$, and the rest follows as in the proof of \cref{thm:coherent}.
\end{proof}

When $A$ supports no stabiliser, $a=0$ and $J_U^A=J_U$. Otherwise $\RoM(J_U^A)$ can be smaller or larger than $\RoM(J_U)$. On the $[[4,1]]$ code $\Sg=\langle Z_1Z_2,Z_2Z_4,X_1X_2X_3X_4\rangle$, in which the erasure of qubits 1 and 2 is correctable, $T\otimes T$ on these qubits acts on the fixed subspace as a Clifford gate, so $\RoM(U\renc U^\dagger)=\RoM(\rho_L)$, although $\RoM(J_{T\otimes T})=\RoM(\ket{H}^{\otimes2})$. On a bulk $Z$ plaquette of a rotated surface code with $d\ge5$, for the rotation $e^{i\pi Z/8}$ of its four qubits, $J_U^A$ is Clifford-equivalent to $\mathrm{CCZ}\ket{{+}{+}{+}}$, so $\RoM(J_U^A)=23/9=2.556$ (Appendix~\ref{app:numerics}), against $\RoM(J_U)=\RoM(\ket{H}^{\otimes4})=2.863$~\cite{HowardCampbell2017}. For $C^4T$ on a five-qubit region whose erasure is correctable and whose only stabiliser is $+Z_4Z_5$, $J_U^A$ is Clifford-equivalent to the Choi state of $C^3T$, so $\RoM(J_U^A)=2.264>\RoM(J_U)=2.195$ (Appendix~\ref{app:diagonal}).

For rotations about a Pauli outside $\Nor$ the condition on the region can be dropped.

\begin{corollary}[Pauli rotations]
\label{cor:pauli}
Let $U=\exp(i\theta P)$ with $P$ a Pauli outside $\Nor$. Then both identities in \cref{eq:main} hold whatever the support of $P$, and $\RoM(J_U)=|\cos2\theta|+|\sin2\theta|$.
\end{corollary}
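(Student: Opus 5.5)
The plan is to reduce everything to a single qubit by a Clifford change of frame built from $P$ and one stabiliser, in the spirit of \cref{lem:bell}. Since $P\notin\Nor$, there is a generator $g\in\Sg$ that anticommutes with $P$. Choose generators $g,g_2,\dots,g_{n-k}$ of $\Sg$ together with logical operators $\bar X_j,\bar Z_j$. Multiplying each $g_i$, $\bar X_j$ and $\bar Z_j$ that anticommutes with $P$ by $g$ (which is harmless, since $g$ is a stabiliser) makes all of them commute with $P$. Every operator in the list still commutes with $g$. The pair $(P,g)$ anticommutes, and every modified generator commutes with both. By the symplectic Gram--Schmidt argument used for \cref{lem:bell}, there is a Clifford $C$ on all $n$ qubits with $CPC^\dagger=Z_1$ and $CgC^\dagger=X_1$, mapping the remaining generators and logical operators to Paulis trivial on qubit $1$. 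Hence $C\renc C^\dagger=\proj{+}_1\otimes V_0'\rho_LV_0'^\dagger$, with $V_0'$ a Clifford encoder of an $[[n-1,k]]$ code.

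In this frame the error becomes $e^{i\theta Z_1}$. So $CU\renc U^\dagger C^\dagger=\proj{\psi_\theta}\otimes V_0'\rho_LV_0'^\dagger$, with $\ket{\psi_\theta}=e^{i\theta Z}\ket{+}$. The Bloch vector of $\ket{\psi_\theta}$ is $(\cos2\theta,\mp\sin2\theta,0)$. Clifford unitaries, and Clifford isometries as in the proof of \cref{lem:sector}, preserve $\RoM$. Therefore $\RoM(U\renc U^\dagger)=\RoM(\psi_\theta\otimes\rho_L)$.

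Next I would apply the same argument to $J_U$. On the support of $P$, the Choi state $(\mathds{1}\otimes e^{i\theta P})\ket{\Phi}$ is a stabiliser state $\ket{\Phi}$ acted on by a rotation about a Pauli that anticommutes with one of its stabilisers, namely $P^T\otimes P$'s partner $Q\otimes Q^*$ for any $Q$ anticommuting with $P$. Viewing $\ket{\Phi}$ as a code with $k=0$, the argument above gives $J_U\simeq\psi_\theta\otimes(\text{stabiliser state})$ up to Clifford. Hence $\RoM(J_U\otimes\rho_L)=\RoM(\psi_\theta\otimes\rho_L)$, and $\RoM(J_U)=\RoM(\psi_\theta)=|\cos2\theta|+|\sin2\theta|$ by the one-qubit formula. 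This proves the first identity and the closed form.

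For the second identity, expand $U=\cos\theta\,\mathds{1}+i\sin\theta\,P$. Then
\begin{equation*}
U\renc U^\dagger=\cos^2\theta\,\renc+\sin^2\theta\,P\renc P+i\sin\theta\cos\theta\,(P\renc-\renc P).
\end{equation*}
By the twirl formula of \cref{lem:twirl}, the cross terms vanish under $\EC$, because $P\renc$ and $\renc P$ map between different sectors; equivalently, the average over $\Sg$ includes $g$, which flips their sign. This leaves the Pauli mixture $\cos^2\theta\,\renc+\sin^2\theta\,P\renc P$. \Cref{lem:sector} then gives $\cos^2\theta\,\RoM(\rho_L)+\sin^2\theta\,\RoM(\rho_L(s_P))$. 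The correction $E_{s_P}$ differs from $P$ by an element of $\Nor$, so $\rho_L(s_P)$ is $\rho_L$ up to a logical Pauli and has the same robustness. This proves the second identity.

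The step needing the most care is the construction of $C$: checking that after multiplying by $g$ every generator and logical operator commutes with both $P$ and $g$, so that qubit $1$ factors out as $\ket{+}$. A related point is confirming that the Choi-state case fits the same template, i.e.\ choosing the anticommuting stabiliser of $\ket{\Phi}$ correctly.
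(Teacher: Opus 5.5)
Your proposal is correct and follows essentially the paper's proof. Both use a Clifford that sends $P$ and an anticommuting stabiliser $g$ to an anticommuting pair on one qubit (you swap the roles of $X_1$ and $Z_1$, which is immaterial), so the code state splits off a single qubit that $e^{i\theta P}$ rotates; both then remove the cross terms $P\renc$ and $\renc P$ with $\EC$ before applying \cref{lem:sector}. One small improvement on the paper: you justify the Choi-state equivalence by treating $\ket{\Phi}$ as a code with no logical qubits, a step the paper only asserts, although your phrase about ``$P^T\otimes P$'s partner'' should simply say that $\mathds{1}\otimes P$ anticommutes with the stabiliser $Q^*\otimes Q$ of $\ket{\Phi}$ for any $Q$ anticommuting with $P$.
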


\begin{proof}
Choose $g\in\Sg$ anticommuting with $P$ and a Clifford $W'$ with $W'PW'^\dagger=X_1$ and $W'gW'^\dagger=Z_1$. Multiplying the other generators by $g$ where needed, $W'\Sg W'^\dagger=\langle Z_1\rangle\times\Sg'$, the code space becomes $\ket{0}_1\otimes\mathcal{H}'$ with $\mathcal{H}'$ the code space of $\Sg'$, and $W'U\renc U^\dagger W'^\dagger=e^{i\theta X}\proj{0}e^{-i\theta X}\otimes V'\rho_LV'^\dagger$. The state $J_U$ is Clifford-equivalent to $e^{i\theta X}\ket{0}\otimes\ket{0}^{\otimes(2|A|-1)}$, whose Bloch vector has components $\cos2\theta$ and $\sin2\theta$. The cross terms of $U\renc U^\dagger$ are proportional to $P\renc$ and $\renc P$, and $\EC$ removes them because $P\notin\Nor$ maps the code space into another sector. Hence, without \eqref{eq:C}, $\EC(U\renc U^\dagger)=\cos^2\theta\,\renc+\sin^2\theta\,P\renc P$, and $P\renc P$ decodes to $\rho_L$ up to a logical Pauli.
\end{proof}

The bounds of \cref{eq:gapbounds} involve only $J_U$ and $\rho_L$, so under \cref{cor:pauli} they also bound the robustness removed by syndrome extraction. On the three-qubit bit-flip code, for example, $Z_1$ is a logical operator, so \eqref{eq:C} fails for $A=\{1\}$, but \cref{cor:pauli} applies to $\exp(i\pi X_1/8)$. On an encoded $\ket{H}$ this error gives $\RoM=(1+3\sqrt2)/3=1.747547$, the robustness of two copies of $\ket{H}$~\cite{HowardCampbell2017}: the state after the error is Clifford-equivalent to $\ket{H}^{\otimes2}$, and syndrome extraction returns $\RoM(H)=\sqrt2$. \Cref{tab:gap} collects further cases. In every row the accompanying notebook also computes $\RoM(J_U\otimes\rho_L)$ and $\RoM(\EC(U\renc U^\dagger))$ separately, and they agree with $\RoM(U\renc U^\dagger)$ and $\RoM(\rho_L)$, respectively, to solver precision.

\begin{table*}[t]
\caption{Robustness before ($\RoM(U\renc U^\dagger)$) and after ($\RoM(\rho_L)$) syndrome extraction for single-qubit coherent errors on the three-qubit bit-flip code and the $[[4,2,2]]$ code. The Hadamard-axis rotation is $\exp[i\pi(X+Z)/(8\sqrt2)]$, about the axis $(1,0,1)/\sqrt2$. $\Delta$ is the robustness removed, and the bounds on $\Delta$ are those of \cref{eq:gapbounds}, the lower bound being the larger of its two terms, with $\mu(J_U)=1.138071$ for $\exp(i\pi X_1/8)$ and $1.284518$ for the Hadamard-axis rotation. On the bit-flip code \eqref{eq:C} fails for $A=\{1\}$, and these rows follow from \cref{cor:pauli}. Values computed by linear programming over all stabiliser states (accompanying notebook).}
\label{tab:gap}
\begin{ruledtabular}
\begin{tabular}{lllcccccc}
code & error & $\rho_L$ & $\RoM(\rho_L)$ & $\RoM(J_U)$ & $\RoM(U\renc U^\dagger)$ & $\Delta$ & lower bound & upper bound\\
\hline
bit-flip & $\exp(i\pi X_1/8)$ & $\ket{0}$ & 1 & 1.414214 & 1.414214 & 0.414214 & 0.414214 & 0.414214\\
bit-flip & $\exp(i\pi X_1/8)$ & $\ket{H}$ & 1.414214 & 1.414214 & 1.747547 & 0.333333 & 0.333333 & 0.585786\\
bit-flip & $\exp(i\pi X_1/8)$ & $\ket{F}$ & 1.732051 & 1.414214 & 2.109268 & 0.377217 & 0.377217 & 0.717439\\
$[[4,2,2]]$ & $\exp(i\pi X_1/8)$ & $\ket{H}\otimes\ket{H}$ & 1.747547 & 1.414214 & 2.218951 & 0.471405 & 0.379357 & 0.723858\\
$[[4,2,2]]$ & Hadamard-axis & $\ket{H}\otimes\ket{0}$ & 1.414214 & 1.707107 & 2.121320 & 0.707107 & 0.686887 & 1.000000\\
$[[4,2,2]]$ & Hadamard-axis & $\ket{F}\otimes\ket{0}$ & 1.732051 & 1.707107 & 2.509368 & 0.777317 & 0.777317 & 1.224745\\
\end{tabular}
\end{ruledtabular}
\end{table*}

\section{Adding a non-stabiliser state}
\label{sec:append}

\cref{thm:coherent}'s positivity claim rests on the following statement, which holds for any states and any numbers of qubits.

\begin{theorem}
\label{thm:append}
For every $m$-qubit state $\sigma$, and every $n$-qubit state $\rho$,
\begin{equation}
\RoM(\sigma\otimes\rho)+1\ge\mu(\sigma)\,\big(\RoM(\rho)+1\big),
\label{eq:append}
\end{equation}
where
\begin{multline}
\mu(\sigma)=\min\big\{\Tr Y:\ Y\in\mathrm{cone}(\stab_m),\\
\bra{t}Y\ket{t}\ge\bra{t}\sigma\ket{t}\ \forall t\in\stab_m\big\},
\label{eq:mu}
\end{multline}
and $\mathrm{cone}(\stab_m)$ is the set of non-negative combinations of pure stabiliser states. Therefore $\mu(\sigma)\ge1$, with equality iff $\sigma$ is a stabiliser mixture. Hence $\RoM(\sigma\otimes\rho)>\RoM(\rho)$ for every $\sigma$ that is not a stabiliser mixture and every $\rho$.
\end{theorem}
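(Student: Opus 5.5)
The plan is to prove \cref{eq:append} by LP duality. I will write $\RoM(\cdot)+1$ as twice the value of a linear program and build a dual-feasible witness for $\sigma\otimes\rho$ as a tensor product of witnesses for $\sigma$ and for $\rho$.

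\textbf{Step 1: reformulate the robustness.} Splitting an optimal decomposition into positive and negative parts gives
\[
\tfrac12(\RoM(\rho)+1)=\min\{\Tr P:\ P,\ P-\rho\in\mathrm{cone}(\stab_n)\}.
\]
Its dual is
\[
\tfrac12(\RoM(\rho)+1)=\max\{\Tr(A\rho):\ 0\le\Tr(As)\le1\ \forall s\in\stab_n\},
\]
which is a shift of \cref{eq:dual}: take $A=(\Omega+\mathds{1})/2$. Dualising \cref{eq:mu} in the same way gives
\[
\mu(\sigma)=\max\Big\{\Tr(B\sigma):\ B=\sum_{t}y_t\proj{t},\ y_t\ge0,\ \Tr(Bs)\le1\ \forall s\in\stab_m\Big\}.
\]
Both programs are finite LPs with feasible interiors; for example $Y=2\cdot\mathds{1}$ is strictly feasible for \cref{eq:mu} because $\mathds{1}$ is a sum of stabiliser basis projectors. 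Strong duality therefore holds.

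\textbf{Step 2: the product witness.} Take an optimal $B$ for $\sigma$ and an optimal $A$ for $\rho$, and check that $B\otimes A$ is feasible for the dual program of $\sigma\otimes\rho$. Fix a pure stabiliser state $s$ on $m+n$ qubits and a pure stabiliser state $t$ on the first $m$. As in the proof of \cref{lem:sector}, projecting part of a pure stabiliser state onto a stabiliser state gives $(\bra t\otimes\mathds{1})s(\ket t\otimes\mathds{1})=c_t\,s_t$, with $c_t\ge0$ and $s_t\in\stab_n$. Hence $\Tr((\proj t\otimes A)s)=c_t\Tr(As_t)\in[0,c_t]$. Summing with weights $y_t\ge0$ gives
\[
0\le\Tr((B\otimes A)s)\le\sum_ty_tc_t=\Tr\big(B\,\Tr_n s\big)\le1.
\]
The last inequality holds because $\Tr_ns$ is a stabiliser mixture and $B$ is feasible. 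Then
\[
\tfrac12(\RoM(\sigma\otimes\rho)+1)\ge\Tr(B\sigma)\Tr(A\rho)=\mu(\sigma)\cdot\tfrac12(\RoM(\rho)+1),
\]
which is \cref{eq:append}. It matters that $B$ lies in $\mathrm{cone}(\stab_m)$: a general witness $\Omega$ for $\sigma$ could not be pushed through the projection argument. This is exactly why $\mu$, and not $\RoM(\sigma)$, appears in the bound.

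\textbf{Step 3: $\mu\ge1$ and the equality case.} Taking $B=\mathds{1}$ is dual-feasible, since $\Tr s=1$, and gives $\mu(\sigma)\ge\Tr\sigma=1$. If $\sigma$ is a stabiliser mixture, then $Y=\sigma$ is primal-feasible, so $\mu(\sigma)\le1$.

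Conversely, suppose $\mu(\sigma)=1$ and let $Y$ be a primal optimum. Then $\Tr(Y-\sigma)=0$ and $\bra t(Y-\sigma)\ket t\ge0$ for every $t\in\stab_m$. For any stabiliser basis $\{t\}$, the sum $\sum_t\bra t(Y-\sigma)\ket t$ is the trace, which is zero. Since every term is non-negative, every term vanishes. Pure stabiliser projectors span all Hermitian operators, for instance through the eigenbases of all Paulis. So $Y=\sigma$, and $\sigma$ lies in the cone with unit trace, which makes it a stabiliser mixture. This is the step I regard as the delicate one, since it is the informational-completeness argument that gives strictness.

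\textbf{Step 4: strict increase.} If $\sigma$ is not a stabiliser mixture, then $\mu(\sigma)>1$. Together with $\RoM(\rho)+1\ge2>0$, \cref{eq:append} gives $\RoM(\sigma\otimes\rho)>\RoM(\rho)$.
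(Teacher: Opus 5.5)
Your proof is correct and is the paper's second, dual proof of \cref{eq:append}. Your witness $B\otimes A=\tfrac12\,\Theta\otimes(\mathds{1}+\Omega_\rho)$ is the paper's $\Omega=\Theta\otimes(\mathds{1}+\Omega_\rho)-\mathds{1}$ after your shift $A=(\Omega+\mathds{1})/2$, and your equality-case argument (summing $\bra t(Y-\sigma)\ket t\ge0$ over a stabiliser basis, then using that stabiliser projectors span) is the paper's. The paper's main proof is the primal counterpart: it projects an optimal decomposition $\Gamma_+-\Gamma_-$ of $\sigma\otimes\rho$ onto each $t$ and uses $Y=2\Tr_n\Gamma_+/(\RoM(\rho)+1)$ as a feasible point of \cref{eq:mu}.
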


\begin{proof}
Write $r=\RoM(\rho)$ and $R'=\RoM(\sigma\otimes\rho)$, and take an optimal decomposition $\sigma\otimes\rho=\Gamma_+-\Gamma_-$ with $\Gamma_\pm\in\mathrm{cone}(\stab_{m+n})$, $\Tr\Gamma_+=(R'+1)/2$ and $\Tr\Gamma_-=(R'-1)/2$. For $t\in\stab_m$ let $\Gamma_{\pm,t}=(\bra{t}\otimes\mathds{1})\Gamma_\pm(\ket{t}\otimes\mathds{1})$ and $p_t=\bra{t}\sigma\ket{t}$. As in the proof of \cref{lem:sector}, $\Gamma_{\pm,t}\in\mathrm{cone}(\stab_n)$, and $\Gamma_{+,t}-\Gamma_{-,t}=p_t\rho$. If $p_t>0$ this is a decomposition of $\rho$, so $a_t=\Tr(\Gamma_{+,t}+\Gamma_{-,t})\ge rp_t$, which holds trivially if $p_t=0$. Let $Y=2\Tr_n\Gamma_+/(r+1)$. Marginals of stabiliser states are stabiliser mixtures, so $Y\in\mathrm{cone}(\stab_m)$, and $\bra{t}Y\ket{t}=(a_t+p_t)/(r+1)\ge p_t$. Hence $(R'+1)/(r+1)=\Tr Y\ge\mu(\sigma)$. Any feasible $Y'$ in \cref{eq:mu} has $\Tr Y'\ge1$, by summing $\bra{t}Y'\ket{t}\ge p_t$ over a stabiliser basis, so $\mu(\sigma)\ge1$. If $\mu(\sigma)=1$, an optimal $Y'$ has $\Tr Y'=1$, which forces $\bra{t}(Y'-\sigma)\ket{t}=0$ for every $t\in\stab_m$, since every pure stabiliser state belongs to a stabiliser basis. These projectors span the operators, so $\sigma=Y'$, a stabiliser mixture. Conversely $Y'=\sigma$ is feasible when $\sigma$ is one.
\end{proof}

By linear-programming duality, $\mu(\sigma)=\max\{\Tr(\Theta\sigma):\ \Theta\in\mathrm{cone}(\stab_m),\ \bra{\phi}\Theta\ket{\phi}\le1\ \forall\phi\in\stab_m\}$. This gives a second proof of \cref{eq:append}: take $\Theta=\sum_tc_t\proj{t}$ optimal in this maximisation, with $c_t\ge0$, and $\Omega_\rho$ optimal in \cref{eq:dual}. For every $\tau\in\stab_{m+n}$ let $\tau_t=(\bra{t}\otimes\mathds{1})\tau(\ket{t}\otimes\mathds{1})$. Each $\Tr[(\mathds{1}+\Omega_\rho)\tau_t]$ lies in $[0,2\Tr\tau_t]$, and $\sum_tc_t\Tr\tau_t=\Tr(\Theta\Tr_n\tau)\le1$, so $\Omega=\Theta\otimes(\mathds{1}+\Omega_\rho)-\mathds{1}$ is feasible in \cref{eq:dual}. Its value on $\sigma\otimes\rho$ is $\mu(\sigma)(\RoM(\rho)+1)-1$.

\begin{proposition}[One qubit]
\label{prop:muone}
For a single-qubit state with Bloch vector $\mathbf{s}$, $\mu=1+\delta$ with $\delta$ the least non-negative number satisfying $\sum_k\max(0,|s_k|-\delta)\le1+\delta$. In particular $\mu\ge(\RoM+3)/4$, $\mu(F)=(3+\sqrt3)/4$ and $\mu(H)=1+(\sqrt2-1)/3$.
\end{proposition}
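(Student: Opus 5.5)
The plan is to solve the linear program \cref{eq:mu} in closed form for $m=1$, using the fact that the cone over the six single-qubit stabiliser states has a simple description. First I will characterise $\mathrm{cone}(\stab_1)$. Write a general Hermitian $Y=(a\mathds{1}+\mathbf{b}\cdot\boldsymbol{\sigma})/2$, so that $\Tr Y=a$. A non-negative combination $\sum_{k,\pm}c_{k,\pm}(\mathds{1}\pm\sigma_k)/2$ has $a=\sum c_{k,\pm}$ and $b_k=c_{k,+}-c_{k,-}$. Hence $Y\in\mathrm{cone}(\stab_1)$ if and only if $\|\mathbf{b}\|_1\le a$. Necessity is the triangle inequality. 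For sufficiency, put weight $|b_k|$ on the appropriate eigenstate of each $\sigma_k$ and spread the remaining $a-\|\mathbf{b}\|_1$ equally over $\proj{0}$ and $\proj{1}$. This is the same fact that underlies the one-qubit formula $\RoM=\max(1,\|\mathbf{r}\|_1)$.

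Next I will rewrite the constraints. For $t$ the $\pm$ eigenstate of $\sigma_k$, we have $\bra{t}Y\ket{t}=(a\pm b_k)/2$ and $\bra{t}\sigma\ket{t}=(1\pm s_k)/2$. The six constraints therefore become $|b_k-s_k|\le a-1$ for $k=x,y,z$. Setting $\delta=a-1\ge0$, the problem reads: minimise $1+\delta$ subject to the existence of $\mathbf{b}$ with $|b_k-s_k|\le\delta$ for each $k$ and $\|\mathbf{b}\|_1\le1+\delta$. For fixed $\delta$, the smallest $\ell_1$ norm over the box is attained coordinatewise by shrinking each $s_k$ toward $0$ by at most $\delta$. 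That minimum is $\sum_k\max(0,|s_k|-\delta)$.

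So $\delta$ is feasible if and only if $\sum_k\max(0,|s_k|-\delta)\le1+\delta$. The left side is continuous and non-increasing in $\delta$, and the right side is strictly increasing. The feasible set is therefore a closed ray $[\delta^*,\infty)$, and $\mu=1+\delta^*$ with $\delta^*$ the least non-negative feasible value, as claimed. The only point needing care is that the coordinatewise shrinking argument really gives the minimum of the $\ell_1$ norm over the box; this is immediate because both the box and the norm separate over coordinates.

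For the bound, if $\RoM=1$ it reduces to $\mu\ge1$, which holds. Otherwise $\RoM=\|\mathbf{s}\|_1$, and since $\max(0,|s_k|-\delta)\ge|s_k|-\delta$, feasibility forces $\|\mathbf{s}\|_1-3\delta\le1+\delta$. This gives $\delta^*\ge(\RoM-1)/4$, that is, $\mu\ge(\RoM+3)/4$.

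For the two examples I will guess that every coordinate with $s_k\neq0$ remains active and then check the guess.
\begin{itemize}
\item For $F$, solving $\sqrt3-3\delta=1+\delta$ gives $\delta=(\sqrt3-1)/4<1/\sqrt3$, consistent with all three coordinates active. Hence $\mu(F)=(3+\sqrt3)/4$, which meets the lower bound with equality.
\item For $H$, solving $\sqrt2-2\delta=1+\delta$ gives $\delta=(\sqrt2-1)/3<1/\sqrt2$, consistent with both nonzero coordinates active. Hence $\mu(H)=1+(\sqrt2-1)/3$.
\end{itemize}
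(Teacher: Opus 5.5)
Your proposal is correct and follows the paper's proof. The paper likewise describes $\mathrm{cone}(\stab_1)$ as $\{(y_0\mathds{1}+\mathbf{y}\cdot\boldsymbol{\sigma})/2:\|\mathbf{y}\|_1\le y_0\}$, rewrites the six constraints as $|y_k-s_k|\le y_0-1$, and takes the coordinatewise minimiser $y_k=\mathrm{sgn}(s_k)\max(0,|s_k|-\delta)$. Your monotonicity argument for the threshold, and your derivations of $\mu\ge(\RoM+3)/4$, $\mu(F)$ and $\mu(H)$, are sound and spell out steps the paper leaves to the reader.
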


\begin{proof}
The cone of the octahedron is $\{(y_0\mathds{1}+\mathbf{y}\cdot\boldsymbol{\sigma})/2:\|\mathbf{y}\|_1\le y_0\}$, and the constraints on the six stabiliser states read $|y_k-s_k|\le y_0-1$. Minimising $y_0=1+\delta$ is solved by $y_k=\mathrm{sgn}(s_k)\max(0,|s_k|-\delta)$.
\end{proof}

The quantity $\mu$ is invariant under Cliffords and unchanged by appending pure stabiliser states. For a Pauli rotation, $J_U$ is Clifford-equivalent to a single-qubit state with two non-zero Bloch components next to a stabiliser state, and \cref{prop:muone} gives $\mu(J_U)-1=(\RoM(J_U)-1)/3$, so that
\begin{equation}
\frac{\RoM(\rho_L)+1}{3}\le\frac{\Delta(U,\rho_L)}{\RoM(J_U)-1}\le\RoM(\rho_L)
\label{eq:pauliratio}
\end{equation}
for every non-Clifford Pauli rotation $U$.

\emph{Tightness.} The bound \eqref{eq:append} is attained. For example $\RoM(F\otimes F)=\RoM(F\otimes F_\perp)=\tfrac12+\sqrt3=(1+\sqrt3)\mu(F)-1$, with $F_\perp$ the state orthogonal to $F$. The value of $\RoM(F\otimes F)$ is given in closed form in Ref.~\cite{HowardCampbell2017} and numerically in Ref.~\cite{HeinrichGross2019}, where $F$ is called $T$. In both cases \cref{thm:append} gives the lower bound, and exact decompositions into stabiliser states, with coefficients in $\mathbb{Q}(\sqrt3)$, give the upper bound. At $\theta=\pi/8$ the lower end of \cref{eq:pauliratio} is attained for $\rho_L=H$ and $F$ (\cref{tab:gap}), with $\Delta=1/3$ and $(1+\sqrt3)(\sqrt2-1)/3$. For $\sigma=H$ and for the Choi states of $\exp(i\pi X/16)$ and of the Hadamard-axis rotation of \cref{tab:gap}, the bound is attained for every $\RoM(\rho)$ from $1.6$ to $\sqrt3$ (Appendix~\ref{app:tightness}).

\Cref{fig:gap} shows $\Delta$ as a function of the rotation angle, with the bounds of \cref{eq:gapbounds}. The excess vanishes at Clifford angles and is largest at $\theta=\pi/8$. It does not follow the size of the coherence between sectors: for the bit-flip code the trace norm of the inter-sector part of $U\renc U^\dagger$ is $\sin2\theta$, which grows up to $\theta=\pi/4$, where $\Delta$ is zero again.

\begin{figure}[t]
\centering
\IfFileExists{fig_gap.pdf}{\includegraphics[width=\columnwidth]{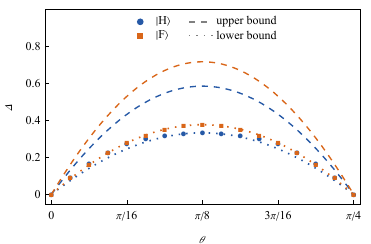}}{\fbox{\parbox{0.9\columnwidth}{\centering\vspace{1.5cm}fig\_gap.pdf (produced by the accompanying notebook)\vspace{1.5cm}}}}
\caption{Robustness $\Delta$ removed by syndrome extraction after a rotation $\exp(i\theta X_1)$, for logical states $\ket{H}$ and $\ket{F}$ (points), with the upper bound $(\RoM(J_U)-1)\RoM(\rho_L)$ and the lower bound $(\RoM(\rho_L)+1)(\RoM(J_U)-1)/3$ (lines) of \cref{eq:gapbounds,eq:pauliratio}. By \cref{cor:pauli} the values are the same for every code with $X_1\notin\Nor$, and they are symmetric about $\theta=\pi/8$, since a Clifford maps $J_U$ at $\theta$ to $J_U$ at $\pi/4-\theta$.}
\label{fig:gap}
\end{figure}

The proof of \cref{thm:append} uses four properties of the stabiliser set. It is closed under marginals and under projecting a subsystem onto a pure stabiliser state, every pure stabiliser state belongs to an orthonormal basis of pure stabiliser states, and these states span the operator space. The same argument therefore applies to the generalised robustness $\RoM_g$~\cite{Steiner2003,Regula2018,Seddon2021}, normalised to vanish on stabiliser mixtures. Writing $\sigma\otimes\rho\le(1+\RoM_g(\sigma\otimes\rho))\tau$ with $\tau$ a stabiliser mixture and projecting the first register onto $t$ gives $p_t(1+\RoM_g(\rho))\le(1+\RoM_g(\sigma\otimes\rho))\Tr\tau_t$, and hence $1+\RoM_g(\sigma\otimes\rho)\ge\mu(\sigma)(1+\RoM_g(\rho))$. The robustness $\RoM_s(\rho)$ in the normalisation of Vidal and Tarrach~\cite{VidalTarrach1999} is the least $s$ for which $(\rho+s\tau)/(1+s)$ is a stabiliser mixture for some stabiliser mixture $\tau$. It satisfies $\RoM=1+2\RoM_s$~\cite{Regula2018}, and \cref{eq:append} becomes $1+\RoM_s(\sigma\otimes\rho)\ge\mu(\sigma)(1+\RoM_s(\rho))$.

For the robustness of entanglement $\RoM_s$, Vidal and Tarrach showed $\RoM_s(\sigma\otimes\rho)\ge\max[\RoM_s(\sigma),\RoM_s(\rho)]$~\cite{VidalTarrach1999}, which allows equality. With pure product states in place of pure stabiliser states and the separable cone in place of $\mathrm{cone}(\stab_m)$ in \cref{eq:mu}, the proof of \cref{thm:append} applies unchanged to the robustness and the generalised robustness of entanglement across the cut $A_1A_2|B_1B_2$, for $\sigma$ on $A_1B_1$ and $\rho$ on $A_2B_2$. Appending any entangled state therefore strictly increases both. Strict increase already follows for pure states, where $1+\RoM_s$ is multiplicative by the pure-state formula of Ref.~\cite{VidalTarrach1999}, and for the generalised robustness when one of the two states is maximally entangled~\cite{Datta2009} or, more generally, maximally correlated, a class that includes pure states (Theorem~9 of Ref.~\cite{RubboliTomamichel2024}, which extends to $\log(1+\RoM_g)$ as the limit $\alpha=z\to\infty$ of its family of measures). For the relative entropy of entanglement it follows from Piani's bound (\cref{sec:intro}) applied to $\sigma\otimes\rho$, since the local measurements include informationally complete ones~\cite{Piani2009}. The $\RoM_g$ case of \cref{eq:append} has the form of Piani's bound with the max-relative entropy in place of the relative entropy. Here $\log(1+\RoM_g)=D_{\max}(\cdot\|\stab)$~\cite{Rubboli2024b}, as for entanglement~\cite{Datta2009}. Writing $Y=c\tau$ in \cref{eq:mu}, with $\tau$ a stabiliser mixture, gives $\log\mu(\sigma)=\min_\tau\max_t\log(\bra t\sigma\ket t/\bra t\tau\ket t)$. This is the cone-restricted max-relative entropy of Ref.~\cite{GeorgeChitambar2024}, with the cone of pure stabiliser states, minimised over stabiliser mixtures $\tau$. Since every pure stabiliser state belongs to a stabiliser basis, it also equals the max-relative entropy measured by measurements whose elements lie in that cone. For entanglement the same holds with the separable cone, for which the cone-restricted and measured max-relative entropies coincide~\cite{Rippchen2025}.

For the Robustness of Magic, $\log[(\RoM+1)/2]$ is not the max-relative entropy to the stabiliser polytope, so this reading does not cover its case. We have not found a version of Piani's bound for a robustness in the literature. Strict increase also follows where $1+\RoM_g$ is multiplicative, as for coherence (Theorem~3 of Ref.~\cite{Zhu2017}). For magic, $1+\RoM_g$ is multiplicative for products of single-qubit states~\cite{Seddon2021,Rubboli2024b} and of a class of depolarised two- and three-qubit states (Theorem~4 of Ref.~\cite{Rubboli2024b}). It is also multiplicative for products of pure states that each have at most three qubits, where it equals the stabiliser extent~\cite{Regula2018,Seddon2021}, which is multiplicative there (Proposition~1 of Ref.~\cite{Bravyi2019}). It is not multiplicative in general~\cite{Heimendahl2021,Seddon2021}. The Robustness of Magic is only submultiplicative. The multiplicative stabiliser norm $\mathcal{D}(\rho)=2^{-n}\sum_P|\Tr(P\rho)|$~\cite{Campbell2011,HowardCampbell2017} gives the lower bound $\RoM(\sigma\otimes\rho)\ge\mathcal{D}(\sigma)\mathcal{D}(\rho)$, which shows strict increase only when $\mathcal{D}(\sigma)\mathcal{D}(\rho)>\RoM(\rho)$. This fails for $\sigma$ close to the stabiliser polytope whenever $\mathcal{D}(\rho)<\RoM(\rho)$.

\section{Magic capacity}
\label{sec:capacity}

A coherent error $U$ on an encoded state can increase its robustness by at most the factor $\capC(U)$, the magic capacity of Seddon and Campbell~\cite{SeddonCampbell2019},
\begin{equation}
\capC(U)=\max_{\sigma\in\stab_{n+m}}\RoM\big((U\otimes\mathds{1}_m)\sigma(U\otimes\mathds{1}_m)^\dagger\big),
\label{eq:capacity}
\end{equation}
with $U$ acting on $n$ qubits and $m\ge n$ ancilla qubits, since $\RoM(U\rho U^\dagger)\le\capC(U)\RoM(\rho)$. Seddon and Campbell showed that an ancilla register of $n$ qubits suffices, that $\RoM(J_U)\le\capC(U)$ because the Choi state is one of the outputs, and that equality holds for gates in the third level of the Clifford hierarchy, which map Paulis to Cliffords under conjugation, such as $T$, CS and CCZ (Theorem~3.2 of Ref.~\cite{SeddonCampbell2019}). Equality does not hold in general: their five-qubit multicontrol-T gate has $\capC(U)=2.264$ against $\RoM(J_U)=2.195$ (Table~III of their supplementary material). For errors on regions satisfying \eqref{eq:C}, \cref{thm:coherent} replaces the bound by the value $\RoM(J_U\otimes\rho_L)$, and \cref{cor:stabA} replaces it by $\RoM(J_U^A\otimes\rho_L)$ when only the erasure of $A$ is correctable. For regions whose erasure is not correctable, apart from rotations about Paulis outside $\Nor$ (\cref{cor:pauli}), the capacity remains the relevant quantity. For gates with $\capC(U)=\RoM(J_U)$, which include those of \cref{prop:capacity1,thm:capacity2} below, the capacity bound gives $\RoM(U\renc U^\dagger)\le\RoM(J_U)\RoM(\rho_L)$ on every region on which $U$ acts. This is the upper bound of \cref{eq:gapbounds}, which a region satisfying \eqref{eq:C} attains for stabiliser $\rho_L$.

\begin{proposition}[One qubit]
\label{prop:capacity1}
For every single-qubit unitary $U$, $\capC(U)=\RoM(J_U)$.
\end{proposition}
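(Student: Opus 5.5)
The plan is to reduce the maximisation in \cref{eq:capacity} to a finite list of input states and compare each candidate with $J_U$. By the result of Seddon and Campbell quoted above, one ancilla qubit suffices, so $\capC(U)$ is the maximum of $\RoM\big((U\otimes\mathds{1})\sigma(U\otimes\mathds{1})^\dagger\big)$ over pure two-qubit stabiliser states $\sigma$. Every such $\sigma$ is one of two kinds, as in the canonical form of Fattal \emph{et al}~\cite{Fattal2004} that underlies \cref{lem:bell}:
\begin{itemize}
\item a product $\ket{a}\otimes\ket{b}$ of single-qubit stabiliser states;
\item a maximally entangled state $(\mathds{1}\otimes C)\ket{\Phi}$, with $C$ a single-qubit Clifford acting on the ancilla.
\end{itemize}
I would treat the two kinds separately.

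For the entangled kind, $C$ acts on the ancilla and so commutes with $U\otimes\mathds{1}$. The output is therefore $(\mathds{1}\otimes C)J_U(\mathds{1}\otimes C)^\dagger$, which has robustness $\RoM(J_U)$ because Clifford unitaries preserve $\RoM$. For the product kind, the output is $U\proj{a}U^\dagger\otimes\proj{b}$. Appending a pure stabiliser state does not change $\RoM$: submultiplicativity gives one inequality and monotonicity under partial trace gives the other. So this output has robustness $\RoM(U\ket{a})$. Hence
\begin{equation*}
\capC(U)=\max\Big[\RoM(J_U),\ \max_{a}\RoM(U\ket{a})\Big],
\end{equation*}
where $a$ runs over the six single-qubit stabiliser states. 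Since $\RoM(J_U)\le\capC(U)$ is already known, it remains to show $\RoM(U\ket{a})\le\RoM(J_U)$ for each $a$.

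For this last step I would measure the ancilla of $J_U$ in the stabiliser basis $\{\ket{\bar a},\ket{\bar a^\perp}\}$, where $\bar a$ denotes complex conjugation in the computational basis, so that $\ket{\bar a}$ is again a stabiliser state. This is a stabiliser measurement. Each outcome occurs with probability $1/2$ and leaves $U\ket{a}$ or $U\ket{a^\perp}$ on the system, by the transpose trick for $\ket{\Phi}$. Since $\RoM$ is non-increasing on average under stabiliser measurements,
\begin{equation*}
\RoM(J_U)\ge\tfrac12\big[\RoM(U\ket{a})+\RoM(U\ket{a^\perp})\big].
\end{equation*}
To turn this average into the needed bound, I would use the one-qubit formula $\RoM=\max(1,|r_x|+|r_y|+|r_z|)$. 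The unitary $U$ acts on Bloch vectors as a rotation $R\in SO(3)$, and $a^\perp$ has Bloch vector $-\mathbf{e}$ when $a$ has Bloch vector $\mathbf{e}$. The two outputs therefore have Bloch vectors $\pm R\mathbf{e}$, which have the same $\ell_1$ norm, so the two terms in the average are equal. This gives $\RoM(U\ket{a})\le\RoM(J_U)$, and hence $\capC(U)=\RoM(J_U)$.

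I expect the only delicate point to be the reduction to one ancilla qubit together with the restriction to pure stabiliser inputs. The reduction is the Seddon--Campbell result assumed above. Restricting to pure inputs is legitimate because $\RoM$ is convex, so the maximum over stabiliser mixtures is attained at a pure stabiliser state. The averaging step is the crux; it works for one qubit because the antipodal symmetry of the Bloch formula makes the two measurement outcomes equally magic.
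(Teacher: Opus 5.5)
Your proof is correct and is essentially the paper's argument: the same split into a Bell-pair case, where the output is $J_U$ up to Cliffords, and a product case, where you measure the reference qubit of $J_U$ in a stabiliser basis and use the antipodal symmetry of the one-qubit formula. The only difference is in the reduction step: you invoke the Seddon--Campbell one-ancilla result and enumerate the two-qubit stabiliser states, whereas the paper works with $m$ ancillas directly. It splits by whether the stabiliser group restricted to the first qubit has order 4, so that \cref{lem:bell} applies, or order 2, so that $\sigma=\phi\otimes\tau$.
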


\begin{proof}
Let $\sigma$ be a pure stabiliser state on $1+m$ qubits. Restricting its stabiliser group to the first qubit gives a subgroup of $\{\mathds{1},X,Y,Z\}$ (modulo phases) of order 2 or 4, since order 1 would make all $2^{1+m}$ elements of the group act trivially on the first qubit, more than a stabiliser group on $m$ qubits can hold. If the order is 4, no Pauli on the first qubit is a stabiliser, so \eqref{eq:C} holds with $A$ the first qubit for $\sigma$ viewed as a code with no logical qubits, and \cref{lem:bell} gives $(U\otimes\mathds{1})\sigma(U\otimes\mathds{1})^\dagger=J_U\otimes\tau$ up to Cliffords, with $\tau$ a stabiliser state. If the order is 2, $\sigma=\phi\otimes\tau$ with $\phi$ a Pauli eigenstate, and $\RoM(U\phi U^\dagger)\le\RoM(J_U)$ by averaged monotonicity: measuring the reference qubit of $J_U$ in the eigenbasis of the same Pauli leaves $U\phi U^\dagger$ or $U\phi_\perp U^\dagger$ with probability $1/2$ each, and these have opposite Bloch vectors and so equal robustness.
\end{proof}

Numerical results of Seddon and Campbell suggested this for rotations about $X$~\cite{SeddonCampbell2019}. For rotations about a Pauli axis, which are Clifford-equivalent to diagonal gates, it also follows from Theorem~S1 of their supplementary material (see the proof of \cref{thm:capacity2} in Appendix~\ref{app:capacity2}), since the affine subspaces of $\Fq$ are its two points and $\Fq$ itself. For two qubits, the canonical form of Fattal \emph{et al}~\cite{Fattal2004}, with the Clifford on the ancilla only as in \cref{lem:bell}, shows that a pure stabiliser state of the system and an ancilla is, up to Cliffords on the ancilla, of one of two kinds. Either it has two Bell pairs between system and ancilla, which $U$ turns into $J_U$, or it has at most one and is a stabiliser state of the system and one ancilla qubit next to a stabiliser state of the other ancilla qubits. By enumeration, the three-qubit stabiliser states fall into 90 classes under Cliffords on the third qubit, so $\capC(U)$ is the largest of $\RoM(J_U)$ and 90 three-qubit robustness values.

\begin{theorem}[Two qubits]
\label{thm:capacity2}
Let $U$ be a two-qubit unitary. If $U=KDK'$ with $K,K'$ Clifford and $D$ diagonal, then $\capC(U)=\RoM(J_U)=\RoM(D\ket{{+}{+}})$. If $U=K(V\otimes\mathds{1})K'$ with $V$ a single-qubit unitary, then $\capC(U)=\RoM(J_U)=\RoM(J_V)$.
\end{theorem}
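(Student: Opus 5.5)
We first reduce to the two normal forms. The magic capacity and the Choi-state robustness are both invariant under Clifford pre- and post-composition. For $\capC$, $K'$ permutes $\stab_{n+m}$ and $K$ is a Clifford applied to the output. For $J_U$, $(\mathds{1}\otimes KUK')\ket{\Phi}=(\mathds{1}\otimes K)(K'^{T}\otimes U)\ket{\Phi}$, and $K'^T$ is Clifford. So it suffices to treat $U=D$ diagonal and $U=V\otimes\mathds{1}$.

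\emph{The single-qubit case.} For $U=V\otimes\mathds{1}$ we have $J_U=J_V\otimes\proj{\Phi}$, so $\RoM(J_U)=\RoM(J_V)$. Any output $(V\otimes\mathds{1}_{1+m})\sigma(V\otimes\mathds{1}_{1+m})^\dagger$ is an output of $V$ acting on a stabiliser state with a $(1+m)$-qubit ancilla. By \cref{prop:capacity1} its robustness is at most $\RoM(J_V)$, which gives $\capC(U)\le\RoM(J_U)$. The reverse inequality holds for every gate.

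\emph{The diagonal case.} Here $J_D$ is Clifford-equivalent to $D\ket{{+}{+}}\otimes\ket{\Phi'}$ with $\ket{\Phi'}$ a stabiliser state. To see this, write $\ket{\Phi}=2^{-1}\sum_x\ket{x}\ket{x}$ and apply CNOTs from the system qubits onto the reference qubits. This maps $\sum_x\ket{x}D\ket{x}$ to $(\sum_x D\ket{x})\otimes\ket{00}$ up to normalisation. So $\RoM(J_D)=\RoM(D\ket{{+}{+}})$, and it remains to show $\capC(D)\le\RoM(D\ket{{+}{+}})$.

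We use the two-kind classification stated before the theorem: up to Cliffords on the ancilla, a pure stabiliser state of system plus ancilla has one of the following forms.

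\begin{enumerate}[label=(\roman*)]
\item Two Bell pairs between system and ancilla. Then the output is $J_D$ up to a Clifford on the ancilla.
\item A three-qubit stabiliser state $\sigma_3$ on the system and one ancilla qubit, tensored with a stabiliser state on the remaining ancillas.
\end{enumerate}

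Only case (ii) needs work. We follow the affine-subspace argument of Theorem~S1 of Ref.~\cite{SeddonCampbell2019}. Every stabiliser state $\ket{\psi}$ of the system plus one ancilla has the form $\sum_{y\in S}i^{q(y)}(-1)^{\ell(y)}\ket{y}$ on an affine subspace $S\subseteq\Fq^3$, with $q$ quadratic. We can Schmidt-decompose along the system in the computational basis: the output $D\ket{\psi}$ becomes a superposition of $D$ applied to computational-basis-phase stabiliser states of the system supported on the projection of $S$ onto $\Fq^2$.

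We then aim to show that $D\ket{\psi}$ is obtained from $D\ket{{+}{+}}\otimes(\text{stabiliser})$ by a stabiliser operation. Diagonal Cliffords (CZ, S, Z) commute with $D$. Projections onto computational-basis states on system qubits commute with $D$ up to a phase. CNOTs from system to ancilla commute with $D$ because $D$ acts on the control. Using these, each affine-subspace type of $S$ (a point, a line, a plane, or all of $\Fq^3$, projected onto the system) maps $D\ket{{+}{+}}$ to $D\ket{\psi}$. Postselected computational measurements on $D\ket{{+}{+}}$ yield $D$ applied to a lower-dimensional affine piece. Averaged monotonicity then handles the case when such a measurement is needed: we choose the measurement so that all outcomes have equal robustness, related by $X$-type Paulis that conjugate $D$ to $D'$, where $D'$ differs from $D$ by a permutation of the diagonal entries.

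\emph{Main obstacle.} The last point is the expected difficulty, because that permutation is not a Clifford in general. Two routes are available. The first is to invoke Theorem~S1 of Ref.~\cite{SeddonCampbell2019} directly, which states $\capC(D)=\RoM(D\ket{+}^{\otimes n})$ whenever the maximum over affine subspaces is attained at the full space. One then has to verify, for $n=2$, that $\RoM(D\ket{\phi_S})\le\RoM(D\ket{{+}{+}})$ for every affine $S\subseteq\Fq^2$. For $|S|\le2$ the state $D\ket{\phi_S}$ is a single-qubit state, or a single-qubit state tensored with a stabiliser state. Its robustness is bounded by that of a one-qubit diagonal restriction, which is obtained from $D\ket{{+}{+}}$ by measuring one qubit in $Z$; both outcomes are diagonal restrictions. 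If the two outcomes have unequal robustness, averaged monotonicity alone does not bound the larger one, so the second route is a direct comparison. We parametrise $D=\mathrm{diag}(1,e^{ia},e^{ib},e^{ic})$ and use the one-qubit formula $|r_x|+|r_y|+|r_z|$ for the one-qubit restrictions. For $D\ket{{+}{+}}$ we use a dual witness $\Omega$ from \cref{eq:dual}, built from the witnesses of its one-qubit marginals conditioned on $Z$-outcomes, to show that $\RoM(D\ket{{+}{+}})$ is at least each single-edge value.
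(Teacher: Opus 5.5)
\paragraph{Where the proposal stands.} Your reductions are correct and match the paper's:
\begin{itemize}
\item Clifford invariance of both quantities.
\item The single-qubit case from \cref{prop:capacity1}, with the second qubit absorbed into the ancilla.
\item The CNOT argument giving $\RoM(J_D)=\RoM(D\ket{{+}{+}})$.
\item Your first route through Theorem~S1 of Ref.~\cite{SeddonCampbell2019}. This leaves only the six lines $L\subset\Fq^2$, each giving robustness $f(\alpha_L)=|\cos\alpha_L|+|\sin\alpha_L|$.
\end{itemize}
The paragraph built on postselected $Z$ measurements cannot work as written, for two reasons. Robustness is monotone only on average, and measuring one qubit in $Z$ reaches only four of the six lines. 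The paper instead uses that CNOT and $X$ gates act as the affine group. This group is transitive on lines and maps $D\ket{{+}{+}}$ to $D'\ket{{+}{+}}$ with $D'$ diagonal, so one may take $L=\{00,01\}$.

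\paragraph{The gap.} The step that carries the theorem, $\RoM(D\ket{{+}{+}})\ge f(\alpha_L)$, is not proved, and the witness you sketch cannot prove it. Any witness assembled from one-qubit witnesses conditioned on $Z$ outcomes has the form $\Omega=\proj{0}\otimes O_0+\proj{1}\otimes O_1$. Feasibility on $\ket{x}\otimes\tau$ forces each $O_x$ to be a feasible one-qubit witness. Hence $\Tr(\Omega\,D\ket{{+}{+}}\bra{{+}{+}}D^\dagger)\le\tfrac12\big(f(\alpha)+f(\alpha')\big)$, which is exactly the averaged-monotonicity bound you already found insufficient. A concrete failure is $D=\mathrm{diag}(1,e^{i\pi/4},1,1)$, which is Clifford-equivalent to controlled-$T$. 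There the bound gives $(1+\sqrt2)/2<\sqrt2=f(\alpha)$, and conditioning on the other qubit gives the same value.

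\paragraph{The missing idea.} You need a witness that couples the coherence of one qubit with the $Z$-correlated coherence of the other. The paper takes $\Omega=\pm IX\pm IY\pm XZ\pm YZ$ and $\Omega=\pm XI\pm YI\pm ZX\pm ZY$. Each set of four Paulis is pairwise anticommuting, so a stabiliser state has non-zero expectation on at most one member, and $\Omega$ is feasible in \cref{eq:dual}.

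Let $\mathbf u,\mathbf u',\mathbf w,\mathbf w'$ be the unit vectors at angles $\alpha=\theta_{01}-\theta_{00}$, $\alpha'=\theta_{11}-\theta_{10}$, $\beta=\theta_{10}-\theta_{00}$ and $\beta'=\theta_{11}-\theta_{01}$. Reading off the expectations of $D\ket{{+}{+}}$ then gives $\RoM\ge B_\pm=\tfrac12\|\mathbf u\pm\mathbf u'\|_1+\tfrac12\|\mathbf w\mp\mathbf w'\|_1$.

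A further trigonometric step is still needed to show $\max(B_+,B_-)\ge f(\alpha)$. It uses $\alpha'-\alpha=\beta'-\beta$ and expands $\alpha,\alpha'$ about the odd multiple of $\pi/4$ fixed by the signs of $\cos\alpha$ and $\sin\alpha$. In the controlled-$T$ example, $B_+=1+1/\sqrt2$ already exceeds $\sqrt2$. Without a construction of this kind and that inequality, your proposal reduces the theorem to its hard case but leaves that case open.
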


The proof, in Appendix~\ref{app:capacity2}, uses Theorem~S1 in the supplementary material of Ref.~\cite{SeddonCampbell2019} and witnesses built from pairwise anticommuting Paulis.

The first case contains every gate $(K_1\otimes K_2)\exp[i(aXX+bYY+cZZ)](K_3\otimes K_4)$ with Clifford $K_j$, since a Clifford maps the commuting pair $XX,ZZ$ to $ZI,IZ$ and so $YY=-XX\cdot ZZ$ to $-ZZ$. The exponential is the non-local part of every two-qubit gate in its canonical decomposition~\cite{KrausCirac2001,Khaneja2001}. The first case also contains the controlled-Hadamard gate, $\mathrm{CH}=e^{-i\pi/4}\,\mathrm{CNOT}\,(\mathds{1}\otimes HSH)\,\mathrm{diag}(1,i,e^{i\pi/4},e^{i\pi/4})\,(\mathds{1}\otimes HS)$. The second case contains, for example, $\exp(i\pi XX/8)\exp(i\pi ZI/8)$, which commutes with the anticommuting pair $IX$, $ZY$ and equals $K(V\otimes\mathds{1})K^\dagger$ with $K=\exp(i\pi ZX/4)$ and $V=\exp(i\pi Y/8)\exp(i\pi Z/8)$. For two-qubit diagonal gates the capacity is reached by the product input $\ket{{+}{+}}$, without an ancilla. Seddon and Campbell observed $\capC(U)=\RoM(J_U)$ numerically for random diagonal gates on two to four qubits~\cite{SeddonCampbell2019} (their Table~1). For the stabiliser nullity, $n-\log_2$ of the number of signed Paulis that stabilise the state, the analogous statement holds for every gate: the largest nullity reachable from stabiliser inputs with ancillas is that of the Choi state (Theorems~22 and 24 of Ref.~\cite{JiangWang2023}). Hou \emph{et al} define a unitary stabiliser R\'enyi entropy and show that it equals $M_\alpha(J_U)$~\cite{Hou2026}; for $M_2$, other stabiliser inputs can give more (Appendix~\ref{app:diagonal}).

The question is open for the remaining two-qubit gates. Of the 17 structured gates of Appendix~\ref{app:numerics}, 12 are covered by \cref{prop:capacity1,thm:capacity2}. For the other five, and for 160 random controlled or multiplexed gates, none of the 90 input classes gives more than $\RoM(J_U)$ to solver precision.

For diagonal gates on $n$ qubits, Theorem~S1 in the supplementary material of Ref.~\cite{SeddonCampbell2019} reduces the capacity to $\capC(U)=\max_{\mathcal{A}}\RoM(U\ket{\mathcal{A}})$ over affine subspaces $\mathcal{A}\subseteq\Fq^n$, with $\ket{\mathcal{A}}$ the uniform superposition over $\mathcal{A}$, and $\mathcal{A}=\Fq^n$ gives $\RoM(J_U)$. Equality with $\RoM(J_U)$ holds for every gate we checked at $n=3$ and $4$. It fails at $n=5$ for the $C^4T$ gate mentioned above, whose capacity is reached by the product input $\ket{+}^{\otimes4}\ket{1}$~\cite{SeddonCampbell2019} (Seddon and Campbell put the phase on $\ket{0^5}$, which moves this input to $\ket{+}^{\otimes4}\ket{0}$; closed forms in Appendix~\ref{app:diagonal}).

\section{Errors on every qubit}
\label{sec:clifford}

When the error acts on every qubit, \eqref{eq:C} fails and syndrome extraction can change the magic of the logical state.

\emph{Uniform rotations.} For a uniform rotation $U=\bigotimes_je^{i\theta P_j}$, with the same angle and Pauli axis on every qubit $j$~\cite{Huang2019,Iverson2020}, $A$ is the whole code block, which supports logical operators, and $U$ is not a single Pauli rotation, so neither \cref{thm:coherent} nor \cref{cor:pauli} applies. \Cref{lem:sector} still gives the robustness after syndrome extraction, $\sum_sP(s)\RoM(\rho_L(s))$ with $\rho_L(s)=K_s\rho_LK_s^\dagger/P(s)$ and logical Kraus operators $K_s=V_s^\dagger UV_0$. Unlike under \eqref{eq:C}, where each $K_s$ is a multiple of a logical Pauli, the $K_s$ can now act on the logical state as non-Clifford operators. On the bit-flip code with $U=\bigotimes_je^{i\theta X_j}$, each $K_s$ is proportional to a logical $X$ rotation: $e^{-i\phi X_L}$ with $\tan\phi=\tan^3\theta$ for the trivial syndrome, which has probability $\cos^6\theta+\sin^6\theta$, and $e^{i\theta X_L}$ for each weight-1 syndrome, as found by Greenbaum and Dutton for this code~\cite{Greenbaum2018} and by Huang \emph{et al} for the equivalent repetition code with $Z$ rotations~\cite{Huang2019}. For $\rho_L=\ket0$ and $0\le\theta\le\pi/4$ the robustness after extraction is $\cos2\theta\,(1+\tfrac12\sin^22\theta)+\sin^32\theta$, which is $7\sqrt2/8$ at $\theta=\pi/8$: after extraction the logical state holds magic it did not have before the error. For $\rho_L=\ket{H}$ the $X_L$ rotations leave $r_x=1/\sqrt2$ unchanged and rotate $(r_y,r_z)$, of length $1/\sqrt2$, so the robustness after extraction is $[1+\cos2\theta\,(1+\tfrac12\sin^22\theta)+\sin^32\theta]/\sqrt2$, which is $7/8+1/\sqrt2$ at $\theta=\pi/8$. On the $[[4,2,2]]$ code with $U=\bigotimes_je^{i\theta Z_j}$, only the trivial syndrome and the syndrome of odd-weight $Z$ errors occur, with $K=\mathds{1}-(1-\cos4\theta)\proj{00}_L$ and $K=i\sin4\theta\,\proj{00}_L$, where $\ket{00}_L$ is the $+1$ eigenstate of both logical $Z$ operators. Syndrome extraction is then a measurement of $\proj{00}_L$, weak at small $\theta$ and projective at $\theta=\pi/8$. For $\rho_L=\ket{H}\otimes\ket{0}$ the robustness after extraction is $1+\sqrt2|\cos4\theta|-\cos^24\theta$. For $0<\theta\le\pi/8$ it exceeds $\RoM(\rho_L)=\sqrt2$ when $\theta<\frac14\arccos(\sqrt2-1)$, reaches its maximum $3/2$ at $\theta=\pi/16$, and falls to 1 at $\theta=\pi/8$. At $\theta=\pi/16$ the state after the error is Clifford-equivalent to the CS resource state $\mathrm{CS}\ket{++}$ with two appended $\ket{0}$, of robustness $11/5$~\cite{HowardCampbell2017}. At $\theta=\pi/8$ it is Clifford-equivalent to $\ket{+}\otimes T\ket{+}$, again with two appended $\ket{0}$, of robustness $\sqrt2$, and extraction removes all of its magic. \Cref{tab:global} lists these cases. Under \eqref{eq:C} syndrome extraction returns $\RoM(\rho_L)$, whereas after a uniform rotation the decoded logical state can hold more or less magic than $\rho_L$.

\begin{table*}[t]
\caption{Uniform rotations. Robustness of the state before syndrome extraction, $\RoM(U\renc U^\dagger)$, and after it, $\RoM(\EC(U\renc U^\dagger))=\sum_sP(s)\RoM(\rho_L(s))$ by \cref{lem:sector}. The values after extraction are $7\sqrt2/8$, $7/8+1/\sqrt2$, $3/2$ and $1$. Before extraction, 2.218951 is $\RoM(\ket{H}^{\otimes3})$~\cite{HowardCampbell2017} and the $[[4,2,2]]$ values are $11/5$, the robustness of $\mathrm{CS}\ket{++}$, and $\sqrt2$. Values before extraction are from linear programming over all stabiliser states, with $11/5$ and $\sqrt2$ confirmed by Clifford equivalences and a linear program in rational arithmetic; the values after extraction follow from the logical Kraus operators above (accompanying notebook).}
\label{tab:global}
\begin{ruledtabular}
\begin{tabular}{llcccc}
code & $U$ & $\rho_L$ & $\RoM(\rho_L)$ & $\RoM(U\renc U^\dagger)$ & $\RoM(\EC(U\renc U^\dagger))$\\
\hline
bit-flip & $\bigotimes_je^{i\pi X_j/8}$ & $\ket{0}$ & 1 & 2.218951 & 1.237437\\
bit-flip & $\bigotimes_je^{i\pi X_j/8}$ & $\ket{H}$ & 1.414214 & 2.835330 & 1.582107\\
$[[4,2,2]]$ & $\bigotimes_je^{i\pi Z_j/16}$ & $\ket{H}\otimes\ket{0}$ & 1.414214 & 2.200000 & 1.500000\\
$[[4,2,2]]$ & $\bigotimes_je^{i\pi Z_j/8}$ & $\ket{H}\otimes\ket{0}$ & 1.414214 & 1.414214 & 1.000000\\
\end{tabular}
\end{ruledtabular}
\end{table*}

\emph{Clifford errors.} A Clifford error on every qubit can make syndrome extraction act as a measurement of a logical operator, which can remove robustness from the logical state itself. Let $U$ be a Clifford on all $n$ qubits, with logical Kraus operators $K_s=V_s^\dagger UV_0$ as above.

\begin{proposition}
\label{prop:measure}
Every $K_s$ is proportional to a unitary if and only if no element of $U^\dagger\Sg U$ lies in $\Nor$ outside $\pm\Sg$. Otherwise every non-zero $K_s$ is proportional to a proper projector on the logical space times a unitary.
\end{proposition}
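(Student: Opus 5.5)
The plan is to compute $K_s^\dagger K_s$ and show that it is a non-negative multiple of a projector on the logical space. The polar decomposition then does the rest. Write $\Pi_s=2^{-(n-k)}\sum_{g\in\Sg}\chi_s(g)\,g$, where $\chi_s(g)=\pm1$ is the eigenvalue of $g$ on sector $s$ (signs included). Then
\begin{equation*}
K_s^\dagger K_s=V_0^\dagger U^\dagger\Pi_sUV_0=2^{-(n-k)}\sum_{g\in\Sg}\chi_s(g)\,V_0^\dagger(U^\dagger gU)V_0 .
\end{equation*}
Since $U$ is Clifford, each $h=U^\dagger gU$ is a signed Pauli. As in the proof of \cref{cor:pauli}, $V_0^\dagger hV_0=0$ when $h\notin\Nor$, because $h$ then maps the code space into another sector. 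When $h\in\pm\Sg$, it equals $\pm\mathds{1}_L$. When $h\in\Nor\setminus\pm\Sg$, it is a non-trivial logical Pauli up to sign.

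The next step is to organise the surviving terms. The set $G=\{g\in\Sg:U^\dagger gU\in\Nor\}$ is a subgroup of $\Sg$. The map $\lambda:g\mapsto V_0^\dagger U^\dagger gUV_0$ is a group homomorphism from $G$ into the signed logical Paulis, because $V_0V_0^\dagger$ commutes with $\Nor$. Its image $L$ is abelian, since $\Sg$ is. Hence $K_s^\dagger K_s=2^{-(n-k)}\sum_{g\in G}\chi_s(g)\lambda(g)$ is $2^{-(n-k)}|G|$ times the group average of the one-dimensional representation $g\mapsto\chi_s(g)\lambda(g)$ of the abelian group generated by these commuting operators. Concretely, choose generators $g_1,\dots,g_r$ of $G$ modulo $\ker\lambda$. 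If $\chi_s\lambda$ is not identically $+1$ on $\ker\lambda$, where $\lambda$ takes only the values $\pm\mathds{1}_L$, the sum vanishes and $K_s=0$. Otherwise the sum equals $2^{-(n-k)}|G|\prod_{j=1}^r\tfrac12\big(\mathds{1}+\chi_s(g_j)\lambda(g_j)\big)$. This is $c_s$ times a projector $Q_s$ of rank $2^{k-r}$, where $r$ is the number of independent non-trivial logical Paulis in $L$. Care is needed where $L$ contains $\pm$ the same logical Pauli; these cases are exactly the $\ker\lambda$ consistency condition, which I would check explicitly.

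With $K_s^\dagger K_s=c_sQ_s$ in hand, the polar decomposition gives $K_s=W_s\sqrt{c_s}\,Q_s$, with $W_s$ a partial isometry on the range of $Q_s$. This extends to a unitary on the logical space because $V_s^\dagger UV_0$ maps into a space of the same dimension $2^k$. Now $Q_s=\mathds{1}_L$ for every non-zero $K_s$ if and only if $r=0$, that is, if and only if every element of $U^\dagger\Sg U$ that lies in $\Nor$ lies in $\pm\Sg$. When $r\ge1$, $Q_s$ has rank $2^{k-r}<2^k$ and is a proper projector. The value of $r$ depends only on $U$ and $\Sg$, not on $s$, so "every non-zero $K_s$" is uniform. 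The converse direction of the ``iff'' needs care. When $r=0$, some $K_s$ could still vanish, but $\sum_sK_s^\dagger K_s=\mathds{1}$, so the non-zero ones are proportional to unitaries, which is what is claimed. When $r\ge1$, every non-zero $K_s$ fails to be proportional to a unitary.

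The main obstacle I expect is the bookkeeping of signs. One must show that the restriction of $\chi_s\lambda$ to $G$ is a genuine character, so that the phases of $U^\dagger gU$ and the signs in $\Sg$ combine multiplicatively. This yields a projector rather than a combination of signed Paulis. I would handle it by fixing Hermitian representatives $\pm$ Paulis throughout and using that $\lambda$ is a homomorphism into Hermitian commuting operators squaring to $\mathds{1}$.
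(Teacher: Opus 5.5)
Your proposal is correct and is essentially the paper's proof: write $K_s^\dagger K_s=V_0^\dagger U^\dagger\Pi_sUV_0$, expand over the stabiliser group, and keep only the terms with $U^\dagger gU\in\Nor$. These form a group with commuting logical images, so the sum is a multiple of a projector that is $\mathds{1}$ or $0$ exactly when the image is trivial modulo signs. Two small corrections: the sign bookkeeping you flag is automatic, since $g\mapsto\chi_s(g)\lambda(g)$ is a representation of $G$ and its average is the projector onto the joint $+1$ eigenspace (zero precisely when some $g\in\lambda^{-1}(\pm\mathds{1}_L)$, not just $g\in\ker\lambda$, has $\chi_s(g)\lambda(g)=-\mathds{1}_L$); and the relation $\sum_sK_s^\dagger K_s=\mathds{1}$ is needed in the $r\ge1$ direction, to guarantee that some $K_s$ is non-zero, not in the $r=0$ direction where you invoke it.
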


\begin{proof}
$K_s^\dagger K_s=V_0^\dagger\Pi_0'V_0$ with $\Pi_0'$ the projector onto the code of $U^\dagger E_s\Sg E_sU$, which is $U^\dagger\Sg U$ up to signs. Hence $K_s^\dagger K_s=|\Sg|^{-1}\sum_{h\in U^\dagger\Sg U}\pm V_0^\dagger hV_0$, and $V_0^\dagger hV_0$ is $\pm\mathds{1}$ for $h\in\pm\Sg$, a signed logical Pauli for $h\in\Nor\setminus\pm\Sg$, and zero otherwise. The elements of $U^\dagger\Sg U$ in $\Nor$ form a group whose logical images commute, so the sum is a multiple of a projector. If that image is trivial modulo signs, the projector is the identity or zero. Otherwise it is zero or a proper projector, and it is non-zero for some $s$ since $\sum_sK_s^\dagger K_s=\mathds{1}$.
\end{proof}

Conjugating by $U$, syndrome extraction after $U$ measures the generators of $U^\dagger\Sg U$ on the code state. \Cref{prop:measure} is therefore the rule for code deformation, stated for the Kraus operators $K_s$: measuring a new stabiliser group measures those of its elements that are logical operators of the old code~\cite{Vuillot2019}.

Three forms of this criterion are useful.

\emph{Products of single-qubit Cliffords} preserve supports. Some product of single-qubit Cliffords makes syndrome extraction measure a logical operator if and only if some stabiliser and some non-trivial logical operator have the same support. There is no such pair on the bit-flip, $[[5,1,3]]$, Steane and $[[9,1,3]]$ surface codes, while on $[[4,2,2]]$ 18 non-trivial logical operators, such as $ZXZX$, share their support with a stabiliser ($XXXX$). For example, Hadamard gates on qubits 3 and 4 of the $[[4,2,2]]$ code take $XXXX$ to $X_1X_2Z_3Z_4$, a logical operator, so syndrome extraction measures it. Since $Z_3Z_4=Z_1Z_2$ on the code space, it is the product of the logical $X$ of the first logical qubit, $X_1X_2$, and the logical $Z$ of the second, $Z_1Z_2$. For $\rho_L=\ket{H}\otimes\ket{0}$ the second logical qubit is in the $+1$ eigenstate of that $Z$, so syndrome extraction acts as a measurement of the logical $X$ of the first. The error is Clifford, so the robustness of the encoded state stays $\sqrt2$, but every outcome leaves a stabiliser state, so syndrome extraction lowers it to 1. Unlike under \eqref{eq:C}, the removed magic is magic of the logical state.

\emph{Commuting $\pi/4$ Pauli rotations} $U=\prod_j\exp(i\pi P_j/4)$ give $U^\dagger gU=g\,Q_g$ up to a phase for $g\in\Sg$, with $Q_g$ the product of the rotation axes that anticommute with $g$. Syndrome extraction measures a logical operator if and only if, for some stabiliser $g$, $Q_g$ is a non-trivial logical operator.

For $Z$ rotations by any angle on the repetition code with stabilisers $X_jX_{j+1}$, Huang \emph{et al}\ found that every $K_s$ is proportional to a unitary for odd $n$, as Greenbaum and Dutton had found earlier~\cite{Greenbaum2018}, while for even $n$ the $K_s$ measure the logical operator $Z^{\otimes n}$, weakly at small angles, and project onto its eigenstates for the syndromes of errors of weight $n/2$~\cite{Huang2019}. At angle $\pi/4$ these rotations are Clifford, and the second form gives the same parity dependence: a stabiliser $X^B$ gives $Q_g=Z^B$, which is a non-trivial logical operator only when $B$ is the whole code, and $X^{\otimes n}$ is a stabiliser only for even $n$.

\emph{General Cliffords} reduce to linear algebra over $\Fq$. Write Paulis as vectors $(x|z)\in\Fq^{2n}$ and let $M$ be the symplectic matrix of $P\mapsto U^\dagger PU$.

\begin{proposition}
\label{prop:F2}
For $g\in\Sg$, $U^\dagger gU$ is $g\,\eta_g$ up to a phase, with $\eta_g=(M+\mathds{1})g$ linear in $g$. Syndrome extraction after $U$ measures a logical operator if and only if some combination $\sum_ic_i\eta_{g_i}$ of the generators' $\eta$'s commutes with every generator without lying in the span of $\Sg$. For a CSS code with check spaces $C_X$, $C_Z\subseteq\Fq^n$:
\begin{enumerate}[label=(\roman*),leftmargin=*,itemsep=0pt]
\item a CNOT circuit with $UX^aU^\dagger=X^{Ga}$ measures a logical operator iff $(G^{-1}+\mathds{1})C_X$ contains an $X$-logical support (a vector of $C_Z^\perp$ outside $C_X$) or $(G^{T}+\mathds{1})C_Z$ contains a $Z$-logical support,
\item a circuit of CZ and S gates, with $Q$ the symmetric matrix holding the CZ graph off the diagonal and the S pattern on it, measures a logical operator iff $QC_X$ meets $C_X^\perp$ outside $C_Z$.
\end{enumerate}
The verdict is the same for $U$ and $U^\dagger$.
\end{proposition}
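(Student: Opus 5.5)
We reduce everything to \cref{prop:measure} in symplectic language. Take $g\in\Sg$ as a vector in $\Fq^{2n}$. Conjugation $P\mapsto U^\dagger PU$ acts on the vector as $g\mapsto Mg$, up to phase. Since $Mg=g+(M+\mathds{1})g$ over $\Fq$, we get $U^\dagger gU\propto g\,\eta_g$ with $\eta_g=(M+\mathds{1})g$, and this is linear in $g$.

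Each $g$ lies in $\Sg\subseteq\Nor$. So $U^\dagger gU$ lies in $\Nor$ iff $\eta_g$ commutes with every stabiliser, and it lies in $\pm\Sg$ iff $\eta_g$ lies in the span of $\Sg$. By linearity, the elements of $U^\dagger\Sg U$ are exactly $\prod_i g_i^{c_i}\cdot\sum_ic_i\eta_{g_i}$. Checking commutation against the generators suffices, since the symplectic form is bilinear. \Cref{prop:measure} then gives the general criterion directly.

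For the CSS cases we specialise $M$.

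\textbf{Case (i).} A CNOT circuit acts separately on the $X$ and $Z$ parts. We have $U^\dagger X^aU=X^{G^{-1}a}$ and $U^\dagger Z^bU=Z^{G^Tb}$. The stabilisers split into $X^a$ with $a\in C_X$ and $Z^b$ with $b\in C_Z$. So the $\eta$-images form $(X\text{-part}:(G^{-1}+\mathds{1})C_X)\oplus(Z\text{-part}:(G^T+\mathds{1})C_Z)$.

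An $X$-type vector $x$ commutes with the stabilisers iff $x\in C_Z^\perp$, and it is a stabiliser iff $x\in C_X$; the $Z$ part is dual. A combination lies in $\Nor\setminus\Sg$ iff at least one of its two parts is a nontrivial logical support. Because the image is a direct sum, we may take that part alone, which gives the stated "or".

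\textbf{Case (ii).} A circuit of CZ and S gates fixes $Z$'s and sends $X^a\mapsto X^aZ^{Qa}$ up to phase. The conjugation by $U^\dagger$ gives the same support map, since signs do not matter. So $\eta$ vanishes on $C_Z$, and on $X^a$ with $a\in C_X$ it equals $Z^{Qa}$. A $Z$-type vector commutes with all stabilisers iff it lies in $C_X^\perp$, and it is a stabiliser iff it lies in $C_Z$. This gives the stated condition on $QC_X$.

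\textbf{Symmetry between $U$ and $U^\dagger$.} This is the step I expect to be the main obstacle. For $U^\dagger$ the matrix is $M^{-1}$, and $(M^{-1}+\mathds{1})=M^{-1}(M+\mathds{1})$. So the $\eta$-image for $U^\dagger$ is the image for $U$ mapped by $M^{-1}$, and $M^{-1}$ does not preserve $\Nor$ in general.

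Instead I would argue at the operator level. Syndrome extraction after $U$ measures a logical operator iff $U^\dagger\Sg U\cap\Nor\not\subseteq\pm\Sg$. That is, some $h\in U^\dagger\Sg U$ commutes with $\Sg$ but is not in $\Sg$.

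For CSS-type Cliffords, where $M$ is a CNOT or phase circuit, there is a concrete route. The $\eta$ of a stabiliser, $\eta_g=U^\dagger gUg$, is itself the obstructing element: $g\eta_g\in\Nor$ iff $\eta_g\in\Nor$. For $U^\dagger$ the corresponding element is $UgU^\dagger g$, which equals $U(g\,U^\dagger gU)U^\dagger$ up to phase. In the cases (i) and (ii), $M^{-1}(M+\mathds{1})=\mathds{1}+M^{-1}$ maps the set of obstructions onto the corresponding set for $U^\dagger$. In (ii) we have $Q\mapsto Q$, since $M^{-1}$ acts as the same $Q$ over $\Fq$. In (i) the pair swaps $G\leftrightarrow G^{-1}$, which exchanges $(G^{-1}+\mathds{1})C_X$ with $(G+\mathds{1})C_X=G(G^{-1}+\mathds{1})C_X$.

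For general $U$, I would use the group-theoretic form. $U^\dagger\Sg U$ and $\Sg$ have the same order. The condition is that $\Sg$ and $\Sg'=U^\dagger\Sg U$ generate, together with their mutual commutant, something non-isotropic. This is the statement that the measured logical group of $\Sg'$ relative to $\Sg$ is nontrivial iff that of $\Sg$ relative to $\Sg'$ is.

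Conjugating by $U$ sends the pair $(\Sg,\Sg')$ to $(U\Sg U^\dagger,\Sg)$, which is the pair for $U^\dagger$ with the roles swapped. So the symmetry reduces to the following lemma: for two stabiliser groups of equal rank, $\Sg'\cap N(\Sg)\subseteq\pm\Sg$ iff $\Sg\cap N(\Sg')\subseteq\pm\Sg'$. I would prove this by counting. $\Sg\cap N(\Sg')$ is the annihilator of $\Sg'$ within $\Sg$, so its dimension is $r-\mathrm{rank}$ of the commutation matrix between $\Sg$ and $\Sg'$. The same count holds with the roles exchanged. Equality of these dimensions with $\dim(\Sg\cap\Sg')$ is then symmetric.
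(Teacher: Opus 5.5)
Your proof is correct and is essentially the paper's. The general criterion and cases (i)--(ii) are derived the same way. Your role-swap lemma is the paper's argument for the last statement: with $\Sigma,\Sigma'$ the equal-dimensional symplectic spaces of $\Sg$ and $U^\dagger\Sg U$, the rank of the commutation matrix gives $\dim(\Sigma'\cap\Sigma^\perp)=\dim(\Sigma\cap\Sigma'^\perp)$, and each side is compared with $\dim(\Sigma\cap\Sigma')$. Your conjugation by $U$ plays the role of the paper's application of $M$.

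You should drop the ``concrete route'' for case (i), though. The identity $(G+\mathds{1})C_X=G(G^{-1}+\mathds{1})C_X$ transfers nothing, since $G$ preserves neither $C_Z^\perp$ nor $C_X$. The type-by-type correspondence it suggests is also false. Take the code $\langle X_1X_2X_3X_4,Z_1Z_2,Z_3Z_4\rangle$ with $U=\mathrm{CNOT}(3,4)\,\mathrm{CNOT}(1,3)$ (control first). Extraction after $U$ measures only the $X$-type logical $X_1X_2$, while after $U^\dagger$ it measures only the $Z$-type logical $Z_1Z_4$. So only the disjunction in (i) is symmetric, and your general count is what establishes it.
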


\begin{proof}
$U^\dagger gU$ has vector $Mg=g+\eta_g$, and membership in $\Nor$ and in $\Sg$ are conditions on vectors, so phases do not matter. The elements of $U^\dagger\Sg U$ correspond to combinations of the generators. For CSS codes $X^aZ^b\in\Nor$ if and only if $a\in C_Z^\perp$ and $b\in C_X^\perp$, so the condition splits by type. For CNOT circuits $U^\dagger X^aU=X^{G^{-1}a}$ and $U^\dagger Z^bU=Z^{G^Tb}$, and for CZ and S circuits $U^\dagger X^aU\propto X^aZ^{Qa}$ while $Z^b$ is unchanged. For the last statement let $\Sigma\subseteq\Fq^{2n}$ be the stabiliser space and $\Sigma'=M\Sigma$, and write $\perp$ for the symplectic complement. The condition is $\Sigma'\cap\Sigma^\perp\subseteq\Sigma$, equivalently $\dim(\Sigma'\cap\Sigma^\perp)=\dim(\Sigma'\cap\Sigma)$. Since $\dim\Sigma'=\dim\Sigma$, $\dim(\Sigma'\cap\Sigma^\perp)=\dim(\Sigma\cap\Sigma'^\perp)$. Applying the symplectic map $M$ to the condition for $M^{-1}$, $\dim(M^{-1}\Sigma\cap\Sigma^\perp)=\dim(M^{-1}\Sigma\cap\Sigma)$, turns it into $\dim(\Sigma\cap\Sigma'^\perp)=\dim(\Sigma\cap\Sigma')$, which is the same condition.
\end{proof}

\begin{corollary}
\label{cor:parity}
On a CSS code whose checks have even weight and whose non-trivial $X$-type and $Z$-type logical operators have odd weight, such as odd-distance rotated surface codes and the Steane code, no layer of commuting $\pi/4$ rotations about even-weight $X$-type and $Z$-type axes measures a logical operator.
\end{corollary}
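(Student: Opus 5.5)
The plan is to apply the second form of the criterion, for commuting $\pi/4$ Pauli rotations, stated just before \cref{prop:F2}. For $U=\prod_j\exp(i\pi P_j/4)$ with commuting axes $P_j$, each $P_j$ either $X^{v_j}$ or $Z^{w_j}$ with $|v_j|$ and $|w_j|$ even, we have $U^\dagger gU=g\,Q_g$ up to a phase for every $g\in\Sg$. Here $Q_g$ is the product of the axes that anticommute with $g$. By that criterion, together with \cref{prop:measure}, syndrome extraction measures a logical operator if and only if some $g\in\Sg$ has $Q_g\in\Nor$ outside $\pm\Sg$. So it suffices to show that every $Q_g$ lying in $\Nor$ lies in $\pm\Sg$.

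The first step is to write $Q_g$ in the form $X^aZ^b$ up to a phase. Here $a=\sum_{j\in S_X(g)}v_j$ and $b=\sum_{j\in S_Z(g)}w_j$ over $\Fq$, where $S_X(g)$ and $S_Z(g)$ are the sets of $X$-type and $Z$-type axes anticommuting with $g$. A sum of even-weight vectors over $\Fq$ has even weight, since $|u+u'|=|u|+|u'|-2|u\wedge u'|$. Hence $|a|$ and $|b|$ are both even.

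The second step uses the CSS splitting from the proof of \cref{prop:F2}. $X^aZ^b\in\Nor$ if and only if $a\in C_Z^\perp$ and $b\in C_X^\perp$, so $X^a$ and $Z^b$ are separately in $\Nor$. By hypothesis, every vector of $C_Z^\perp\setminus C_X$ (an $X$-logical support) has odd weight. Since $|a|$ is even, $a\in C_X$. In the same way, $b\in C_Z$. Therefore $X^a$ and $Z^b$ are stabilisers up to sign, and so is $Q_g$. The even weight of the checks makes the hypothesis consistent: odd-weight logical operators cannot be multiplied by checks into even weight.

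Since $g$ ranged over all of $\Sg$, not only the generators, no element of $U^\dagger\Sg U$ lies in $\Nor\setminus\pm\Sg$, and the corollary follows from \cref{prop:measure}. The main point needing care is the separation of $Q_g$ into its $X$ and $Z$ parts. This requires only that membership in $\Nor$ for a CSS code splits by type, which the proof of \cref{prop:F2} supplies. The phase in $U^\dagger gU=gQ_g$ is irrelevant because the criterion is stated modulo signs.

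For the named examples, the remaining task is to verify the hypotheses. The Steane code has checks of weight 4 and $X$- and $Z$-type logical operators of odd weight, since every non-trivial element is the all-ones word plus a check. On odd-distance rotated surface codes the checks have weight 2 or 4. A non-trivial logical operator is a check product times a representative of odd weight $d$, so it has odd weight.
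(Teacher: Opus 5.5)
Your proposal is correct and follows the paper's own proof. Both arguments use the $\pi/4$-rotation form of the criterion, observe that $Q_g$ has even-weight $X$ and $Z$ parts because every axis does, and then use the CSS splitting of $\Nor$ together with the odd weight of non-trivial logical operators to conclude $Q_g\in\pm\Sg$. The paper also records that $Q_{gg'}=Q_gQ_{g'}$, which you avoid needing by applying the parity argument to every $g\in\Sg$ directly.
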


\begin{proof}
For $g,g'\in\Sg$, $Q_{gg'}=Q_gQ_{g'}$ up to a phase. Every rotation axis has even weight, so every $Q_g$ has $X$ and $Z$ parts of even weight. An element of $\Nor$ whose $X$ and $Z$ parts have even weight lies in $\Sg$, so no $Q_g$ is a non-trivial logical operator.
\end{proof}

By \cref{cor:parity}, on rotated surface codes of every odd distance, layers of $\pi/4$ rotations about $ZZ$ or $XX$ on all nearest-neighbour edges (for $ZZ$ also with the diagonal edges) do not measure a logical operator. At $d=3,5,7$ the same holds for CZ on all nearest-neighbour edges, with or without a following layer of $H$ gates, and for $S$ or $H$ on every qubit. Every horizontal or vertical CNOT circuit we tried measures one. The row cascades $\mathrm{CNOT}(c,c+1)$, $c=0,\dots,d-2$, applied from either end, the layers of disjoint pairs with $c$ even or with $c$ odd, and the even layer followed by the odd one each measure the logical $X$ operator, an $X$ string on a column. The vertical cascade and the disjoint vertical pairs measure the logical $Z$ operator, a $Z$ string on a row. The mechanism is visible in \cref{prop:F2}(i): under a layer of disjoint horizontal CNOTs, $\eta_g$ of an $X$ check is $X$ on the targets of its control qubits, a vertical pair for a plaquette, a single qubit for a boundary check, or nothing for a check with no control qubit, and these combine along a target column into the column's $X$ string (\cref{fig:surface}). Partial CZ layers behave differently from full ones: on $d=3$, with qubits numbered row by row from 0, the two gates $\mathrm{CZ}(0,1)\mathrm{CZ}(1,2)$ on the first row already measure the logical $Z$ operator, and of the 4095 non-empty subsets of the twelve nearest-neighbour CZ gates, applied as a layer, 1840 measure a logical operator, while by \cref{cor:parity} no subset of $\pi/4$ $ZZ$ or $XX$ rotations does. A CZ is a $ZZ$ rotation times $S$ gates, and the $S$ gates break the parity.

\begin{figure}[t]
\centering
\begin{tikzpicture}[scale=0.62]
  \foreach \r in {0,...,3} \foreach \c in {0,...,3} {
    \pgfmathtruncatemacro{\t}{mod(\r+\c,2)}
    \ifnum\t=0 \fill[red!15] (\c,-\r) rectangle (\c+1,-\r-1); \else \fill[blue!12] (\c,-\r) rectangle (\c+1,-\r-1); \fi
  }
  \foreach \c in {1,3} { \fill[red!15] (\c,0) arc (180:0:0.5) -- cycle; }
  \foreach \c in {0,2} { \fill[red!15] (\c,-4) arc (-180:0:0.5) -- cycle; }
  \foreach \r in {1,3} { \fill[blue!12] (4,-\r) arc (90:-90:0.5) -- cycle; }
  \foreach \r in {0,2} { \fill[blue!12] (0,-\r) arc (90:270:0.5) -- cycle; }
  \draw[gray!60] (0,0) grid[step=1] (4,-4);
  \foreach \r in {0,...,4} \foreach \c in {0,2} {
    \draw[-{Stealth[length=4pt]},thick,black!70] (\c+0.12,-\r) -- (\c+0.86,-\r);
  }
  \foreach \r in {0,...,4} \foreach \c in {0,...,4} { \fill[black] (\c,-\r) circle (2.4pt); }
  \foreach \r in {0,...,4} { \draw[orange!80!black,line width=1.4pt] (3,-\r) circle (6pt); }
  \draw[orange!80!black,line width=1.2pt,dashed] (3,0.35) -- (3,-4.35);
  \fill[red!15] (5.0,-0.75) rectangle (5.3,-0.45);
  \node[anchor=west,font=\small] at (5.4,-0.6) {$X$ check};
  \fill[blue!12] (5.0,-1.45) rectangle (5.3,-1.15);
  \node[anchor=west,font=\small] at (5.4,-1.3) {$Z$ check};
  \draw[-{Stealth[length=4pt]},thick,black!70] (5.0,-2.1) -- (5.3,-2.1);
  \node[anchor=west,font=\small] at (5.4,-2.1) {CNOT};
  \draw[orange!80!black,line width=1.2pt] (5.15,-2.9) circle (6pt);
  \node[anchor=west,font=\small,align=left,orange!80!black] at (5.4,-3.05) {measured\\ $X$ string};
\end{tikzpicture}
\caption{A layer of disjoint horizontal CNOTs (arrows, control to target) on the $d=5$ rotated surface code, $X$ checks red, $Z$ checks blue. Each $X$ check acquires $X$ on the targets of its controls, and these combine into the $X$ string on one column (circled), which commutes with every check and is not a stabiliser: syndrome extraction after this layer measures a logical operator.}
\label{fig:surface}
\end{figure}

\section{Magic-state memories}
\label{sec:memories}

We ask how many rounds a stored single-qubit magic state stays outside the stabiliser polytope under Pauli noise and, at the end of the section, under a uniform rotation repeated in every round. With syndrome extraction and Pauli recovery in every round, \cref{lem:sector} holds round after round, and the one-qubit formula for $\RoM$ gives the robustness in closed form and the lifetime as the first round at which it reaches 1. Let the decoded logical channel have class probabilities $q_I,q_X,q_Y,q_Z$ and eigenvalues
\begin{equation}
\begin{aligned}
\Lambda_x&=1-2(q_Y+q_Z),\quad\Lambda_y=1-2(q_X+q_Z),\\
\Lambda_z&=1-2(q_X+q_Y).
\end{aligned}
\end{equation}
Each $q_L$ is the total probability of the errors that the decoder leaves as the logical Pauli $L$, a sum over syndrome sectors of coset probabilities. After $m$ rounds with the syndrome record discarded, a logical state with Bloch vector $\mathbf{r}$ has Bloch vector $(\Lambda_x^mr_x,\Lambda_y^mr_y,\Lambda_z^mr_z)$ and $\RoM=\max(1,\sum_k|\Lambda_k|^m|r_k|)$, and its magic lifetime, the number of rounds until it enters the stabiliser polytope, is
\begin{equation}
m^*=\min\Big\{m:\ \sum_k|\Lambda_k|^m|r_k|\le1\Big\}.
\label{eq:lifetime}
\end{equation}
Keeping the record gives the syndrome-resolved value $\sum_sP(s)\RoM(\rho_L(s))$ of \cref{lem:sector}, which is at least as large. The difference is robustness stored in the classical record.

By \cref{eq:lifetime} the lifetime depends on the orientation of the stored state. The $y$ component is flipped by both $X_L$ and $Z_L$ errors, and $\Lambda_y$ is the smallest eigenvalue if and only if $q_Y\le\min(q_X,q_Z)$. On a CSS code with equal $X$ and $Z$ distances in which no minimum-weight $X$ logical operator has the same support as a minimum-weight $Z$ logical operator, every logical $Y$ operator is heavier than the distance. On the rotated surface code the lightest logical $Y$ operators have weight $2d-1$ at $d=3$ and $5$, so a residual logical $Y$ needs about $d$ errors against $(d+1)/2$ for $X_L$ or $Z_L$, and at low noise with physical $X$, $Y$ and $Z$ rates of the same order, such as depolarising noise, $q_Y\le\min(q_X,q_Z)$. A state in the $xz$ plane, such as the Hadamard eigenstate with Bloch vector $(1,0,1)/\sqrt2$~\cite{BravyiKitaev2005}, then outlives $\ket{H}$, whose Bloch vector $(1,1,0)/\sqrt2$ lies in the $xy$ plane. With $q_X=q_Z=q\ll1$ and $q_Y$ negligible, the $xz$ lifetime is, to leading order, $\ln\sqrt2/(2q)$ and the $xy$ lifetime $\ln(1/x_*)/(2q)$, with $x_*=(\sqrt{1+4\sqrt2}-1)/2\approx0.790$ the positive root of $x+x^2=\sqrt2$, a ratio of $1.47$ that is independent of $q$ and so of the distance. \Cref{tab:lifetimes} gives code-capacity lifetimes. The rotated $[[9,1,3]]$ surface code shows the factor (265 against 182 rounds). On the Steane code, where $X_L$, $Y_L$ and $Z_L$ all have weight-3 representatives on the same support, the code prefers no plane. There the decoder ties between three logical classes in 42 of the 64 sectors, and the tie-breaking rule decides which plane, if any, is preferred. Over 400 random rules the median lifetime in each plane is 165 rounds. Breaking every tie so that $Z_L$ is the least likely residual logical error gives 192 rounds in the $xy$ plane and 155 in the other two. The corresponding rules for $X_L$ and $Y_L$ favour the $yz$ and $xz$ planes, respectively. The quantum Reed--Muller $[[15,1,3]]$ code, with $Z$-distance 3 and $X$-distance 7, protects the $z$ component almost perfectly, and there states in the planes containing $z$ outlive those in the $xy$ plane by a factor 2.5.

\emph{Coherent noise.} If the same uniform rotation $\bigotimes_je^{i\theta X_j}$ acts on the bit-flip code in every round, the logical Kraus operators of \cref{sec:clifford} are commuting logical $X$ rotations. With the record discarded, the decoded logical channel keeps $r_x$ and multiplies $w=r_y+ir_z$ by
\begin{equation}
\lambda=(c^3+is^3)^2+3c^2s^2(c-is)^2
\label{eq:lambda}
\end{equation}
in each round, with $c=\cos\theta$ and $s=\sin\theta$, so after $m$ rounds $\RoM=\max(1,|r_x|+|\mathrm{Re}\,\lambda^mw|+|\mathrm{Im}\,\lambda^mw|)$. The Pauli-twirled noise, in which each qubit suffers an $X$ error with probability $s^2$, multiplies $w$ by $\mathrm{Re}\,\lambda=1-2q_X$, with $q_X=3c^2s^4+s^6$, and for small $\theta$ both $|\lambda|$ and $\mathrm{Re}\,\lambda$ are $1-6\theta^4+O(\theta^6)$. The phase, $|\arg\lambda|=4\theta^3+O(\theta^5)$, rotates $w$. Since $|\mathrm{Re}\,z|+|\mathrm{Im}\,z|\ge|z|$, the rotation can only raise $|\mathrm{Re}\,\lambda^mw|+|\mathrm{Im}\,\lambda^mw|$ above $|\lambda|^m|w|$, and the robustness is no longer monotone in $m$: a state can leave the stabiliser polytope again after entering it. For each state we give two rounds: its magic lifetime, the first round at which it enters the polytope, and the round from which it stays inside. At $\theta=0.1$, $\ket{H}$ and the $xz$-plane state, whose $w$ starts on an axis, enter after 1589 rounds, against 1489 under the twirled noise and the leading-order value $\ln(1+\sqrt2)/(6\theta^4)=1469$. This is the first round after $|\lambda|^m\le\sqrt2-1$ at which $w$ is near an axis, which it passes about every $\pi/(8\theta^3)$ rounds, and the states stay inside from round 1918. The $yz$-plane state, whose $w$ starts midway between the axes, would reach an axis after about $\pi/(16\theta^3)$ rounds, but enters earlier, after 172 rounds, against 586. Its robustness rises again to 1.13 after 363 rounds, and it stays inside from round 489. The face state enters after 1361 rounds, against 1698, and stays inside from round 1666, so under this noise it is no longer the longest-lived. A stored $\ket0$, a stabiliser state, gains magic: its robustness exceeds 1 from the first round and reaches its maximum, 1.27, after 162 rounds.

\begin{table}[t]
\caption{Lifetimes $m^*$ of \cref{eq:lifetime} (rounds, syndrome record discarded) of single-qubit magic states in a code-capacity model: on each qubit in each round $X$, $Y$ and $Z$ errors each occur with probability $p/3$, $p=0.01$, followed by perfect syndrome extraction and a coset maximum-likelihood decoder~\cite{BravyiSucharaVargo2014}, which takes each $E_s$ from the most probable coset of $\Sg$ with syndrome $s$. The states Clifford-equivalent to $\ket{H}$ used here have Bloch vectors $(1,1,0)/\sqrt2$ ($xy$), $(1,0,1)/\sqrt2$ ($xz$) and $(0,1,1)/\sqrt2$ ($yz$), and the face state $(1,1,1)/\sqrt3$. For the Steane code the table gives the median over 400 random rules for breaking ties between logical classes. For the other codes the lifetimes are the same for every tie-breaking rule we tried.}
\label{tab:lifetimes}
\begin{ruledtabular}
\begin{tabular}{lcccc}
code & $xy$ & $xz$ & $yz$ & face\\
\hline
unencoded & 26 & 26 & 26 & 41\\
Steane $[[7,1,3]]$ & 165 & 165 & 165 & 261\\
surface $[[9,1,3]]$ & 182 & 265 & 182 & 328\\
Reed--Muller $[[15,1,3]]$ & 54 & 136 & 136 & 156\\
\end{tabular}
\end{ruledtabular}
\end{table}

\section{Odd prime dimension}
\label{sec:odd}

In odd prime dimension the negative volume of the Wigner function, $N(\rho)=\sum_u|W_\rho(u)|-1$, twice the sum negativity of Ref.~\cite{Veitch2014}, is a second measure of magic, and lifetimes defined by it and by robustness can be compared. \Cref{lem:sector} holds in every dimension, and $N$ obeys the same sector rule: the negative volume $N_C$ of the code-adapted representation of Ref.~\cite{BozzettoHance2026} satisfies $N_C(\rho)=\sum_sP(s)N(\rho_L(s))$, and $N+1\le\RoM$ because stabiliser states have non-negative Wigner functions~\cite{Gross2006}. \Cref{thm:coherent} also holds for qudit stabiliser codes of odd prime dimension, with Weyl operators in place of Paulis (and $gOg^\dagger$, $P\renc P^\dagger$ in place of $gOg$, $P\renc P$ in \cref{lem:twirl}), $\ket{\Phi}$ a product of maximally entangled qudit pairs and $\RoM$ defined by the qudit stabiliser states: the proofs of \cref{lem:twirl,lem:bell} use only that the Weyl operators modulo phases form a symplectic space over $\mathbb{Z}_d$ on which Clifford unitaries realise every symplectic map~\cite{Gross2006}, and the four properties used in the proof of \cref{thm:append} hold for qudits. Since $N+1=\sum_u|W_\rho(u)|$ is multiplicative~\cite{Veitch2014} and unchanged by Clifford isometries~\cite{Gross2006}, the analogue of \cref{eq:main} for the negative volume is $N(U\renc U^\dagger)+1=(N(J_U)+1)(N(\rho_L)+1)$ and $N_C(U\renc U^\dagger)=N(\rho_L)$. The mana $\log(N+1)$~\cite{Veitch2014} of the state after the error therefore exceeds that of $\rho_L$ by the mana of $J_U$, for every logical state.

The negative volume does not determine the robustness: of the two kinds of qutrit state of largest sum negativity~\cite{Veitch2014}, the strange state $(\ket1-\ket2)/\sqrt2$ has $N=2/3$ and $\RoM=3$, and the Norrell state $(-\ket0+2\ket1-\ket2)/\sqrt6$ has $N=2/3$ and $\RoM=2$. Lifetimes defined by the two measures agree when the noise never takes the state through a Wigner-positive non-stabiliser state. Such states exist and are bound magic states~\cite{Veitch2012,Veitch2014}. The following theorem gives, in each odd prime dimension, a state and a class of channels for which the lifetimes agree.

Let $A_0$ be the parity operator and $\Pi_-$ its $-1$ eigenspace, and let $\rho_-=\Pi_-/\Tr\Pi_-$. This is the unique state invariant under the symplectic part of the Clifford group with $W(0)=-1/d$, since the invariant operators are spanned by $\mathds{1}$ and $A_0$~\cite{Gross2006}. Its Wigner function is $-1/d$ at the origin and $1/(d(d-1))$ elsewhere~\cite{Gross2006}, and at $d=3$ it is the strange state. Call a Weyl channel $\Lambda_p(\rho)=\sum_vp_vD_v\rho D_v^\dagger$, with $D_{(a,b)}=X^aZ^b$, radial if $p_{\lambda v}=p_v$ for every non-zero $\lambda\in\mathbb{Z}_d$. Depolarising noise and complete dephasing along any Weyl axis are radial, and at $d=3$ radial means symmetric, $p_v=p_{-v}$.

\begin{theorem}
\label{thm:radial}
For every odd prime $d$ and every radial $p$, $\Lambda_p(\rho_-)$ is a stabiliser mixture if and only if its Wigner function is non-negative, if and only if $\max_vp_v\le1/d$.
\end{theorem}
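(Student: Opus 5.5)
The plan is to work entirely in the discrete phase space $\mathbb{Z}_d^2$. There a Weyl channel acts on Wigner functions by convolution: $W_{\Lambda_p(\rho)}(u)=\sum_vp_vW_\rho(u-\tilde v)$, with $\tilde v$ the phase-space point that $D_v$ translates by (a fixed invertible linear relabelling of $v$, which preserves radiality). The strange-type state has $W_{\rho_-}(u)=\frac1{d(d-1)}-\frac1{d-1}\delta_{u,0}$, so the output is
\begin{equation*}
W_{\Lambda_p(\rho_-)}(u)=\frac{1}{d(d-1)}-\frac{p_{u}}{d-1},
\end{equation*}
after relabelling $p$ by $\tilde v$. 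This gives directly that the Wigner function is non-negative iff $\max_vp_v\le1/d$; this step does not even need radiality. The implication ``stabiliser mixture $\Rightarrow$ non-negative Wigner function'' is the standard fact that pure stabiliser states have non-negative Wigner functions~\cite{Gross2006}, extended by convexity. What remains is the converse, that $\max_vp_v\le1/d$ makes $\Lambda_p(\rho_-)$ a stabiliser mixture, and this is where radiality is used.

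For this I will use the description of single-qudit stabiliser states in odd prime dimension: their Wigner functions are exactly $\frac1d\mathbf 1_L$ for the $d(d+1)$ affine lines $L\subset\mathbb{Z}_d^2$~\cite{Gross2006}. The map $\rho\mapsto W_\rho$ is linear and injective, so it suffices to write $W_{\Lambda_p(\rho_-)}$ as a convex combination of these line functions. Radiality means $p$ takes a value $p_0$ at the origin and a value $q_\ell$ on the $d-1$ non-zero points of each of the $d+1$ lines $\ell$ through the origin, with $p_0+(d-1)\sum_\ell q_\ell=1$. The output Wigner function is therefore constant on each punctured line through the origin. For each such $\ell$ I introduce two stabiliser mixtures. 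The first is $\sigma_\ell$, the line $\ell$ itself, with Wigner function $\frac1d$ on $\ell$. The second is $\tau_\ell$, the uniform mixture of the $d-1$ lines parallel to $\ell$ that miss the origin, with Wigner function $0$ on $\ell$ and $\frac1{d(d-1)}$ off $\ell$.

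I will then make the ansatz $\Lambda_p(\rho_-)=\sum_\ell a_\ell\tau_\ell+b\sum_\ell\sigma_\ell$ and match values at the origin and on each punctured line. The origin condition gives $\frac{(d+1)b}{d}=\frac{1-dp_0}{d(d-1)}$, so $b=\frac{1-dp_0}{(d-1)(d+1)}$, which is non-negative exactly when $p_0\le1/d$. Writing $A=\sum_\ell a_\ell$, the condition on the punctured line $\ell$ reads
\begin{equation*}
\frac{A-a_\ell}{d(d-1)}+\frac bd=\frac{1}{d(d-1)}-\frac{q_\ell}{d-1}.
\end{equation*}
This forces $a_\ell=dq_\ell+c$ with a single constant $c$, which is fixed by summing over $\ell$ together with the trace condition $A+(d+1)b=1$. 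The remaining checks are then routine: the solution has $a_\ell\ge0$ for all $\ell$, and the weights sum to one (automatic from the trace).

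The main obstacle I expect is the sign of the constant $c$, that is, whether $a_\ell\ge0$ for every $\ell$ uses the hypothesis $q_\ell\le1/d$ or whether the ansatz is too rigid. If $c$ comes out negative in some regime, I would enlarge the ansatz by allowing line-dependent $b_\ell$, or by adding the maximally mixed state, whose Wigner function is $1/d^2$ everywhere. The resulting small linear system is symmetric under the lines through the origin and has one parameter per line, so feasibility should reduce to exactly the inequalities $p_0\le1/d$ and $q_\ell\le1/d$. As a check at $d=3$, complete dephasing along one axis has $q_\ell=1/3$ on that axis and $0$ elsewhere, with $p_0=1/3$. The output should then be the boundary case, which is a mixture of the lines through the origin.
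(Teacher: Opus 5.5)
\textbf{There is a genuine gap: the main ansatz fails, and the step that replaces it is asserted rather than proved.} Your set-up is sound and matches the paper: the convolution formula, the equivalence of non-negativity with $\max_vp_v\le1/d$, the easy direction, and your generators $\sigma_\ell,\tau_\ell$, which are the paper's radial generators up to normalisation.

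The problem is the ansatz with a common weight $b$. That system is square ($d+2$ unknowns, $d+2$ values), so its solution is unique. Solving it gives $c=-2(1-dp_0)/(d^2-1)\le0$, hence $a_\ell=dq_\ell-2(1-dp_0)/(d^2-1)$. This is negative for every direction with $q_\ell<2(1-dp_0)/(d(d^2-1))$. For instance, at $d=3$ take $p_0=0$, $q_\ell=1/4$ on two directions and $0$ on the other two. This satisfies $\max_vp_v\le1/3$, yet your solution has $a_\ell=-1/4$ on the two empty directions. The hypothesis $q_\ell\le1/d$ never enters your feasibility conditions, which already signals that the ansatz is too rigid.

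Adding $\mathds{1}/d$ cannot rescue it. Since $\mathds{1}/d=\frac{1}{d+1}\sum_\ell\big(\frac1d\sigma_\ell+\frac{d-1}{d}\tau_\ell\big)$ already lies in the ansatz, uniqueness means that giving it weight $m>0$ only lowers $b$ and every $a_\ell$.

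The enlargement that works is your other option, line-dependent weights $b_\ell$ on the lines through the origin, which is exactly the paper's decomposition. But then feasibility is the whole content of the theorem, and you only assert that it ``should reduce'' to the hypotheses. Eliminating the equations gives $\sum_\ell b_\ell=\beta:=(1-dp_0)/(d-1)$ and $a_\ell=(d-1)b_\ell+dq_\ell-\beta$. So a non-negative solution exists if and only if $\sum_\ell\max(0,\beta-dq_\ell)\le(d-1)\beta$, and this inequality needs an argument. Each term is at most $\beta$, which settles the case of at most $d-1$ positive terms. If $d$ or $d+1$ terms are positive, you must sum them using $\sum_\ell q_\ell=(1-p_0)/(d-1)$ and $\beta\le1$. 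When exactly $d$ are positive, you also need $q_\ell\le1/d$ on the direction left out to reach $(d-1)\beta$. This case split is the paper's proof, and it is where the upper bound $p_v\le1/d$ is actually used.

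Separately, your $d=3$ check is misdescribed. Dephasing along $\ell_0$ gives $W=0$ at the origin, so the output is $\tau_{\ell_0}$, the mixture of the lines parallel to $\ell_0$ that miss the origin, not a mixture of lines through it.
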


\begin{proof}
A Weyl channel convolves the Wigner function with $p$, so $W(\Lambda_p\rho_-)(u)=(1/d-p_u)/(d-1)$, and the claim is that $h=1-dp$ lies in the cone spanned by the indicator functions of the $d(d+1)$ lines of $\mathbb{Z}_d^2$, the Wigner functions of pure stabiliser states~\cite{Gross2006}, whenever $0\le h\le1$. Averaging over the scalings $v\mapsto\lambda v$, radial generators suffice: the line through the origin in direction $k$, $(1;\mathbf{e}_k)$, and the union of the other $d-1$ lines of direction $k$, $(0;\mathbf{1}-\mathbf{e}_k)$, in coordinates listing the value at the origin and on the punctured line of each of the $d+1$ directions. Writing $h=\sum_ka_k(1;\mathbf{e}_k)+\sum_kb_k(0;\mathbf{1}-\mathbf{e}_k)$ forces $B=\sum_kb_k=(\sum_jh_j-h_0)/d$ and $a_j=h_j-B+b_j$, so a non-negative decomposition exists if and only if $\sum_j\max(0,B-h_j)\le B$. With $\sum_uh_u=d(d-1)$, $\epsilon=1-B=h_0/(d-1)$ and $\sum_jh_j=d-\epsilon$, this is $\sum_j\max(0,h_j-B)\le(d-1)\epsilon$. Since $h_j\le1$, each term is at most $\epsilon$. If at most $d-1$ of the $h_j$ exceed $B$ the sum is at most $(d-1)\epsilon$, and if at least $d$ do, then, since $h_j\ge0$ and $B=1-h_0/(d-1)\ge0$, it is at most $\sum_jh_j-dB=(d-1)\epsilon$. The converse holds because stabiliser mixtures have non-negative Wigner functions.
\end{proof}

For depolarising noise the theorem has a short proof. At the Wigner threshold the state has $W(0)=0$ and $W(u)=1/(d^2-1)$ elsewhere, which is the Wigner function of the uniform mixture of the $d^2-1$ pure stabiliser states whose lines miss the origin, and stronger noise mixes this state with $\mathds{1}/d$.

A composition of radial channels is again radial, so under radial noise the negativity-based and robustness-based lifetimes of $\rho_-$ coincide. Neither the state nor the channel class can be relaxed in general (Appendix~\ref{app:odd}). At $d=3$ the Wigner-positive non-stabiliser state of Ref.~\cite{Gross2006}, an equal mixture of the strange state and two of its displacements, is $\Lambda_p(\rho_-)$ for a Weyl channel that is uniform on three displacements, one of them the identity. This channel has $\max_vp_v=1/3$ but is not symmetric.

\section{Discussion}
\label{sec:discussion}

Under Pauli noise the robustness of an encoded magic state is the syndrome-weighted robustness of its logical states. After a coherent error on a correctable region that supports no stabiliser, it is the robustness of the logical state appended with the error's Choi state. The excess over the logical robustness is strictly positive whenever the error is not Clifford, and syndrome extraction removes this excess, leaving the logical robustness unchanged. Errors on regions that can hold a logical operator are (apart from rotations about Paulis that anticommute with a stabiliser) controlled by the magic capacity, which we have shown equals the Choi-state robustness for every single-qubit gate and for every two-qubit gate Clifford-equivalent to a diagonal or single-qubit gate. A Clifford error on every qubit preserves the robustness of every logical state unless $U^\dagger gU$ is a non-trivial logical operator for some stabiliser $g$, in which case syndrome extraction measures a logical operator. This condition reduces to linear algebra over $\Fq$. Note we have treated syndrome extraction as ideal, and did not model faults inside the extraction circuit, which can spread one fault over part of a check's support.

Several open questions remain. The question of whether $\capC(U)=\RoM(J_U)$ is open for two-qubit gates that are Clifford-equivalent neither to a diagonal nor to a single-qubit gate, and for diagonal gates on three and four qubits (our numerical checks support this in both cases, but we don't yet have an analytical proof). For three-qubit diagonal gates the subspace inputs include two-qubit phase states, whose robustness cannot always be certified by anticommuting witnesses, so the argument in the proof of \cref{thm:capacity2} (Appendix~\ref{app:capacity2}) does not extend directly to three qubits. For the three states $\sigma$ of Appendix~\ref{app:tightness}, \cref{thm:append} is attained by some $\rho$ for every value of $\RoM(\rho)$ with $1.6\le\RoM(\rho)\le\sqrt3$, and for $\sigma=H$ also with $\sqrt2\le\RoM(\rho)<1.6$. Elsewhere, and for other $\sigma$, it is not known whether the bound is attained. Unitary errors that act between two rounds of syndrome extraction combine into one unitary on the union of their regions, so \cref{thm:coherent} applies to the product when the union satisfies \eqref{eq:C}, \cref{cor:stabA} when the erasure of the union is correctable, and \cref{cor:pauli} when the product is a rotation about a Pauli outside $\Nor$. In other cases, as for the rotations of qubits 1 and 2 of the $[[4,2,2]]$ code in \cref{sec:coherent}, the robustness before extraction is known only in examples and is bounded by the magic capacity. On non-CSS codes the criteria of \cref{sec:clifford} do not split into $X$-type and $Z$-type conditions. For memories, the orientation rule of \cref{sec:memories} is a choice of logical frame at injection, and the size of its gain at circuit level depends on the syndrome-extraction schedule through the relative rates of logical $X$ and $Z$ errors. Under a uniform rotation repeated in every round of the bit-flip code, the logical rotation moves the stored state between orientations, a state can regain magic after entering the stabiliser polytope, and the face state is no longer the longest-lived (\cref{sec:memories}).

\textit{Acknowledgements -} JRH acknowledges support from a Royal Society Research Grant (RG/R1/251590), an EPSRC Mathematical Sciences Small Grant (UKRI3647), and from their EPSRC Quantum Technologies Career Acceleration Fellowship (UKRI1217).

\textit{Data availability -} A Wolfram Mathematica notebook reproducing all numerical results in this paper is openly available on Zenodo~\cite{BozzettoHance2026Notebook}.

\bibliographystyle{unsrturl}
\bibliography{ref.bib}

\appendix

\section{Proofs of the twirl lemma and the Bell-pair normal form}
\label{app:bell}

\begin{proof}[Proof of \cref{lem:twirl}]
Expanding $\Pi_s=|\Sg|^{-1}\sum_g\chi_s(g)g$ in terms of the characters $\chi_s$ of $\Sg$ and using their orthogonality gives the first statement. Since $g\renc g=\renc$ and $U$ is supported on $A$, $gU\renc U^\dagger g=(g_AUg_A)\renc(g_AUg_A)^\dagger$, with $g_A$ the restriction of $g$ to $A$. The restriction map is a homomorphism whose image $G_A$ has commutant $\Nor\cap\Pa_A$ in $\Pa_A$, and each element of $G_A$ has the same number of preimages, so the average over $\Sg$ is the average over $G_A$. Under \eqref{eq:C}, $G_A=\Pa_A$ and the average is the Pauli twirl. If the erasure is correctable but $A$ supports stabilisers, the commutant consists of stabilisers, the average keeps the products $u_Pu_{P'}^*P\renc P'$ with $PP'$ in the commutant, and each such group of terms is a multiple of $P\renc P$ because stabilisers act trivially on $\renc$. Paulis in different groups differ by an element outside $\Nor$ and so have different syndromes.
\end{proof}

\begin{proof}[Proof of \cref{lem:bell}]
Under \eqref{eq:C} the restriction map $\Sg\to\Pa_A$ is onto modulo phases (proof of \cref{lem:twirl}), so for each $j\in A$ there are $g_{X,j},g_{Z,j}\in\Sg$ whose restrictions to $A$ are $X_j$ and $Z_j$. Their parts $a_{X,j}$, $a_{Z,j}$ outside $A$ inherit the commutation relations of the $X_j$ and $Z_j$: the $g$'s commute, so two outside parts anticommute if and only if the corresponding parts on $A$ do. None of the $2|A|$ outside parts is a phase, since then $X_j$ or $Z_j$ itself would lie in $\Sg$, against \eqref{eq:C}. They are also independent, since a product of a non-empty subset equal to a phase would commute with everything, while the partner of any member anticommutes with it. Condition \eqref{eq:C} forces $4^{|A|}\le2^{n-k}$, so the complement of $A$ has room, and a Clifford $W$ on the complement sends $a_{X,j}$ and $a_{Z,j}$ to $X_{j'}$ and $Z_{j'}$ on fresh qubits $A'$. The conjugated group then contains $\pm X_jX_{j'}$ and $\pm Z_jZ_{j'}$ for every $j$, and a Pauli on $A'$ absorbed into $W$ removes the signs, so the pairs are in $\ket{\Phi^+}$. Every other generator, and every logical operator, commutes with these and can be multiplied by them to act trivially on $A\cup A'$, since the commutant of $\langle XX,ZZ\rangle$ in the two-qubit Pauli group is $\langle XX,ZZ\rangle$. What remains is a stabiliser code $\Sg''$ on $n-2|A|$ qubits with the same logical qubits, and $WV_0$ is a Clifford isometry with image $\ket{\Phi^+}^{\otimes|A|}\otimes\mathcal{H}''$, with $\mathcal{H}''$ the code space of $\Sg''$, so the induced encoder $V_0''$ is a Clifford isometry.
\end{proof}

\section{Tightness of the appending bound}
\label{app:tightness}

For some $\sigma$ the bound of \cref{thm:append} is attained over a whole range of $r=\RoM(\rho)$. Let $\rho_r$ be a face state, with Bloch vector $(r/3)(\pm1,\pm1,\pm1)$, so that $\RoM(\rho_r)=r$. For $\sigma=H$ and for the Choi states of $\exp(i\pi X/16)$ and of the Hadamard-axis rotation of \cref{tab:gap}, each of the eight sign choices gives $\RoM(\sigma\otimes\rho_r)-r=(r+1)(\mu(\sigma)-1)$ at $r=1.6$ and at $r=\sqrt3$, and for $\sigma=H$ also at $r=\sqrt2$. Since $\rho_r$ is affine in $r$ and $\RoM$ is convex, equality then holds for $1.6\le r\le\sqrt3$, and for $\sigma=H$ for $\sqrt2\le r\le\sqrt3$. So for these $\sigma$ no bound that depends only on $\sigma$ and $\RoM(\rho)$ improves on \cref{thm:append} in these ranges. At smaller $r$ the face states need not attain the bound: for $\sigma=H$ and $r=1.1$ the smallest of the eight values is $0.3947$, while the larger of the two lower bounds of \cref{eq:gapbounds}, with $H$ in place of $J_U$, is $0.3142$. The proof of \cref{thm:append} discards the marginal constraint $\Tr_n\Gamma_+-\Tr_n\Gamma_-=\sigma$. In linear-programming tests with random single-qubit $\sigma$ next to $\ket0$, $\ket{H}$ and $\ket{F}$, keeping it gave nothing beyond the larger of the two lower bounds in \cref{eq:gapbounds}.

\section{Proof of the two-qubit capacity theorem}
\label{app:capacity2}

\begin{proof}[Proof of \cref{thm:capacity2}]
Both quantities are invariant under $U\to KUK'$, so the second case is \cref{prop:capacity1}, with the second qubit as part of the ancilla. For the first, let $U=\mathrm{diag}(e^{i\theta_x})$, $x\in\Fq^2$. Theorem~S1 in the supplementary material of Ref.~\cite{SeddonCampbell2019} reduces the capacity of a diagonal gate to $\capC(U)=\max_{\mathcal{A}}\RoM(U\ket{\mathcal{A}})$ over affine subspaces $\mathcal{A}\subseteq\Fq^2$, with $\ket{\mathcal{A}}$ the uniform superposition over $\mathcal{A}$, and Eq.~(S41) there gives $\RoM(J_U)=\RoM(U\ket{{+}{+}})$, the case $\mathcal{A}=\Fq^2$. A point gives $\RoM=1$. A line $L=\{x,x+b\}$ gives a state Clifford-equivalent to $\ket{0}\otimes(\ket{0}+e^{i\alpha_L}\ket{1})/\sqrt2$ with $\alpha_L=\theta_{x+b}-\theta_x$, of robustness $f(\alpha_L)=|\cos\alpha_L|+|\sin\alpha_L|$. It remains to show $\RoM(U\ket{{+}{+}})\ge f(\alpha_L)$ for each of the six lines. CNOT and $X$ gates act on $\Fq^2$ as the affine group, which is transitive on lines and maps $U\ket{{+}{+}}$ to $U'\ket{{+}{+}}$ with $U'$ diagonal, so we may take $L=\{00,01\}$. With $\alpha=\theta_{01}-\theta_{00}$, $\alpha'=\theta_{11}-\theta_{10}$, $\beta=\theta_{10}-\theta_{00}$, $\beta'=\theta_{11}-\theta_{01}$ and $\phi=\alpha'-\alpha=\beta'-\beta$, the state $\psi=U\ket{{+}{+}}$ has
\begin{align*}
\av{IX},\av{IY}&=\tfrac12(\cos\alpha+\cos\alpha'),\ \tfrac12(\sin\alpha+\sin\alpha'),\\
\av{ZX},\av{ZY}&=\tfrac12(\cos\alpha-\cos\alpha'),\ \tfrac12(\sin\alpha-\sin\alpha'),\\
\av{XI},\av{YI}&=\tfrac12(\cos\beta+\cos\beta'),\ \tfrac12(\sin\beta+\sin\beta'),\\
\av{XZ},\av{YZ}&=\tfrac12(\cos\beta-\cos\beta'),\ \tfrac12(\sin\beta-\sin\beta').
\end{align*}
For a set $M$ of pairwise anticommuting Paulis, $\Omega=\sum_{P\in M}\pm P$ satisfies $|\Tr(\Omega\sigma)|\le1$ on every stabiliser state, since a stabiliser state has a non-zero expectation on at most one member of $M$. By \cref{eq:dual}, $\RoM(\rho)\ge\sum_{P\in M}|\av{P}|$. For anticommuting $M$ the reduced stabiliser polytope of Ref.~\cite{Varela2026} is a cross-polytope by their Theorem~3, so this is their reduced robustness whenever it exceeds $1$. The sets $\{IX,IY,XZ,YZ\}$ and $\{XI,YI,ZX,ZY\}$ are pairwise anticommuting, so with $\mathbf{u}=(\cos\alpha,\sin\alpha)$, $\mathbf{u}'=(\cos\alpha',\sin\alpha')$, $\mathbf{w}=(\cos\beta,\sin\beta)$, $\mathbf{w}'=(\cos\beta',\sin\beta')$ and $\|\cdot\|_1$ the 1-norm on $\mathbb{R}^2$,
\begin{equation*}
\RoM(\psi)\ge B_\pm=\tfrac12\|\mathbf{u}\pm\mathbf{u}'\|_1+\tfrac12\|\mathbf{w}\mp\mathbf{w}'\|_1 .
\end{equation*}
Since $\mathbf{w}'$ is $\mathbf{w}$ rotated by $\phi$ and $f\ge1$, $\tfrac12\|\mathbf{w}+\mathbf{w}'\|_1\ge|\cos(\phi/2)|$ and $\tfrac12\|\mathbf{w}-\mathbf{w}'\|_1\ge|\sin(\phi/2)|$. Let $\boldsymbol{\varsigma}=(\mathrm{sgn}\cos\alpha,\mathrm{sgn}\sin\alpha)$ with $\mathrm{sgn}\,0=1$, so $\boldsymbol{\varsigma}\cdot\mathbf{u}=f(\alpha)$, and write $\alpha=\alpha_0+x$, $\alpha'=\alpha_0+y$ with $\alpha_0$ the direction of $\boldsymbol{\varsigma}$, an odd multiple of $\pi/4$. Then $\boldsymbol{\varsigma}\cdot\mathbf{u}=\sqrt2\cos x$, $\boldsymbol{\varsigma}\cdot\mathbf{u}'=\sqrt2\cos y$, $\phi=y-x$, and bounding each $\|\mathbf{u}\pm\mathbf{u}'\|_1$ below by its pairing with $\boldsymbol{\varsigma}$,
\begin{align*}
B_+-f(\alpha)&\ge|\sin\tfrac{y-x}{2}|\,\big(1-\sqrt2\,|\sin\tfrac{x+y}{2}|\big),\\
B_--f(\alpha)&\ge|\cos\tfrac{y-x}{2}|\,\big(1-\sqrt2\,|\cos\tfrac{x+y}{2}|\big).
\end{align*}
One of $|\sin\frac{x+y}{2}|$ and $|\cos\frac{x+y}{2}|$ is at most $1/\sqrt2$, so $\max(B_+,B_-)\ge f(\alpha)$.
\end{proof}

\section{Diagonal gates on more qubits}
\label{app:diagonal}

For diagonal gates on $n$ qubits the reduction of \cref{sec:capacity} gives $\capC(U)=\RoM(J_U)$ if and only if $\RoM(U\ket{+}^{\otimes n})\ge\RoM(U\ket{\mathcal{A}})$ for every proper affine subspace $\mathcal{A}$. This holds for every gate we checked at $n=3$ (seven named gates such as CCZ and CCT, and 150 random diagonal gates) and at $n=4$ (six named and eight random gates). It fails at $n=5$ for the multicontrol-T gate $C^{4}T$~\cite{SeddonCampbell2019}. An affine subspace that misses the point carrying the phase of $C^{n-1}T$ gives a stabiliser state. For an affine subspace of dimension $k<n$ through that point, $\RoM(C^{n-1}T\ket{\mathcal{A}})=\RoM(C^{k-1}T\ket{+}^{\otimes k})$ (Observation~S2 in the supplementary material of Ref.~\cite{SeddonCampbell2019}), and these values are $\sqrt2$, $1+3\sqrt2/5$, $(34+7\sqrt2)/20$ and $(178+15\sqrt2)/88$ for $k=1,\dots,4$ and $(834+31\sqrt2)/400$ for $k=5$ (Appendix~\ref{app:numerics}), or $1.414$, $1.849$, $2.195$, $2.264$ and $2.195$~\cite{SeddonCampbell2019}. The four-qubit value is the largest, so the five-qubit capacity exceeds the five-qubit Choi value and is reached by the product input $\ket{+}^{\otimes4}\ket{1}$. Seddon and Campbell conjecture that, for each family of multicontrol phase gates, the capacity is reached on $\ket{+}^{\otimes K}$ for some finite $K$ and is the same for all $n\ge K$ (their Conjecture~8.1). For $C^{n-1}T$, Table~III of their supplementary material suggests $K=4$, which would fix the capacity at $2.264$ for all $n\ge4$. The stabiliser norm $2^{-k}\sum_P|\langle P\rangle|$ of $C^{k-1}T\ket{+}^{\otimes k}$ is a lower bound on its robustness. Writing $P\propto X^{\mathbf a}Z^{\mathbf b}$, $\langle P\rangle$ vanishes for $\mathbf a=0\neq\mathbf b$, and for $\mathbf a\neq0$ its modulus is $1-(2-\sqrt2)2^{-k}$ if $\mathbf b=0$, and $(2-\sqrt2)2^{-k}$ or $\sqrt2\,2^{-k}$ if $\mathbf b\neq0$ and $\mathbf a\cdot\mathbf b$ is even or odd. Summing, with $\langle\mathds1\rangle=1$ and, for each of the $2^k-1$ vectors $\mathbf a\neq0$, $2^{k-1}$ vectors $\mathbf b$ with $\mathbf a\cdot\mathbf b$ odd and $2^{k-1}-1$ non-zero $\mathbf b$ with $\mathbf a\cdot\mathbf b$ even, gives $\mathcal{D}_k=2-(5-2\sqrt2)2^{-k}+(4-2\sqrt2)4^{-k}$, which rises from $(1+\sqrt2)/2$ at $k=1$ towards 2, so the robustness values stay above $1.99$ for $k\ge8$. For the stabiliser R\'enyi entropy $M_2$~\cite{Leone2022} the analogue of $\capC(U)=\RoM(J_U)$ fails earlier: $M_2(C^{k-1}T\ket{+}^{\otimes k})=\log_2(4/3)$, $\log_2(64/43)$ and $10-\log_2(555+168\sqrt2)$, or $0.415$, $0.574$ and $0.370$, for $k=1,2,3$, so already at three qubits a subspace input has larger $M_2$ than the Choi state.

\section{Counterexamples in odd prime dimension}
\label{app:odd}

Neither the state nor the channel class in \cref{thm:radial} can be relaxed in general. At $d=5$ the polytope of symmetric channels with $\max_vp_v\le1/5$ has 726 vertices, and 720 of them take $\rho_-$ to Wigner-positive states with $\RoM>1$ (up to $6/5$). The other six are radial, so by \cref{thm:radial} their outputs are stabiliser mixtures. At $d=7$ the symmetric channel uniform on the seven displacements $0$, $\pm(1,0)$, $\pm(0,1)$, $\pm(1,1)$ gives $\RoM=8/7$. The pure states in the $-1$ eigenspace of $A_0$ include, up to Cliffords, the states that van Dam and Howard found to tolerate the most depolarising noise~\cite{vanDamHoward2011}, but they do not inherit \cref{thm:radial}. At $d=5$ the state $(\ket1-\ket4)/\sqrt2$ under the radial channel with weight $9/50$ on each non-zero multiple of the displacement $(1,0)$ and $7/100$ on each non-zero multiple of $(1,4)$ has a non-negative Wigner function and $\RoM=1.378$, and at $d=7$ the state $(\ket1-\ket6)/\sqrt2$ with weights $3/25$ on $(1,1)$ and $7/150$ on $(1,6)$ gives $\RoM=1.482$. Along the depolarising line $(1-\gamma)\rho+\gamma\mathds{1}/d$ these states become stabiliser mixtures at $\gamma=d/(d+1)$~\cite{vanDamHoward2011}, where their Wigner function also becomes non-negative~\cite{Veitch2012}. Van Dam and Howard use Galv\~ao's notion of Wigner positivity~\cite{Galvao2005}: non-negativity in every discrete Wigner function of Gibbons \emph{et al}~\cite{Gibbons2004} at once. For this notion, non-negativity is equivalent to membership of the stabiliser polytope for mixed states as well~\cite{Cormick2006}. For the single Wigner function used here the equivalence holds for pure states but fails for some mixed states~\cite{Gross2006,Veitch2012}. We confirmed the threshold for $(\ket0-\ket1)/\sqrt2$, which is Clifford-equivalent to $(\ket1-\ket{d-1})/\sqrt2$, with exact decompositions at $d=3,5,7$, and at $d=3$ it also follows from \cref{thm:radial}.

\section{Numerical methods}
\label{app:numerics}

Robustness values and $\mu$ (\cref{eq:mu}) come from linear programs over all pure stabiliser states, generated as the Clifford orbit of $\ket{0^n}$: 6, 60, 1080 and 36\,720 states for one to four qubits. Two-qubit capacities use one representative of each of the 90 classes of three-qubit stabiliser states under Cliffords on the third qubit, and the Clifford equivalences of \cref{sec:capacity} are checked with exact arithmetic, or by a search over the 11\,520 two-qubit Cliffords modulo phases. The 17 structured gates of \cref{sec:capacity} are $T\otimes T$, CS, $\exp(i\pi ZZ/8)$, $\exp[i\pi(XX+ZZ)/8]$, $T\otimes\mathds{1}$, $\mathrm{CNOT}(T\otimes T)$, CH, controlled-$\sqrt{T}$, $\exp(i\pi XY/8)\exp(i\pi ZZ/8)$, $\sqrt{\mathrm{SWAP}}$ (equal to $\exp[-i\pi(XX+YY+ZZ)/8]$ up to a phase) and $\exp[i\pi(XX+YY)/8]$, which are Clifford-equivalent to diagonal gates, $\exp(i\pi XX/8)\exp(i\pi ZI/8)$, which is Clifford-equivalent to a single-qubit gate, and $\exp(i\pi XX/8)\exp(i\pi ZI/8)\exp(i\pi IZ/8)$, $(T\otimes T)\,\mathrm{CNOT}\,(R_H\otimes\mathds{1})$, $\mathrm{CNOT}\,(T\otimes R_H)$, $\mathrm{CS}\,(H\otimes\mathds{1})\,\mathrm{CS}$ and the controlled Hadamard-axis rotation by $\pi/8$, which are neither. Here $R_H=\exp[i\pi(X+Z)/(8\sqrt2)]$ is the Hadamard-axis rotation and CNOT has control on the first qubit. For multiplexed gates $\proj{0}\otimes V_0+\proj{1}\otimes V_1$, which include the controlled gates, the Choi state is, up to an appended $\ket{0}$, Clifford-equivalent to the output of the gate on the input with $\ket{+}$ on the control and a Bell pair between target and ancilla. This input is one of the 90, so the capacity of such a gate is the largest of 90 three-qubit robustness values. The 160 random gates of \cref{sec:capacity} are 100 controlled gates and 60 other multiplexed gates. For the permutation-invariant states $C^{k-1}T\ket{+}^{\otimes k}$, averaging an optimal decomposition over qubit permutations reduces \eqref{eq:rom} to one constraint per permutation class of Paulis (56 for five qubits), each stabiliser state entering through its Pauli expectations summed over each class. At $k=5$ the 2\,423\,520 stabiliser states give 25\,425 distinct vectors. Exact primal and dual solutions in $\mathbb{Q}(\sqrt2)$ with equal values certify the closed forms of Appendix~\ref{app:diagonal}. Exact rational primal and dual solutions certify the value $\RoM(\mathrm{CCZ}\ket{{+}{+}{+}})=23/9$ used after \cref{cor:stabA}, which Ref.~\cite{HowardCampbell2017} found numerically. The values of $\RoM(F\otimes F)$ and $\RoM(F\otimes F_\perp)$ combine the lower bound of \cref{thm:append} with exact decompositions in $\mathbb{Q}(\sqrt3)$, and the face-state equalities of Appendix~\ref{app:tightness} hold to solver precision in floating-point linear programs. The criteria of \cref{sec:clifford} are evaluated in $\Fq$ arithmetic and validated against the logical Kraus operators on the $[[4,2,2]]$, $[[5,1,3]]$ and Steane codes. The coherent-noise lifetimes of \cref{sec:memories} iterate \cref{eq:lambda}, which is checked against the logical Kraus operators of the bit-flip code. Coset sums for \cref{tab:lifetimes} are exact rational sums over all Paulis on the code ($4^9$ for the surface code), and over all Paulis of weight at most 5 for the Reed--Muller code, where the decoder is maximum likelihood for the distribution restricted to these Paulis. The omitted probability, $4.6\times10^{-9}$, moves each $\Lambda_k$ of this decoder by less than $10^{-8}$ and changes no lifetime. The checks in \cref{sec:odd} and Appendix~\ref{app:odd} use exact rational linear programs in Wigner coordinates for the channels applied to $\rho_-$ at $d=5,7$, floating-point programs for the other examples, and exact decompositions with coefficients in $\mathbb{Q}(e^{2\pi i/d})$ for the depolarising thresholds.

\end{document}